\keywords{Program extraction, modified realizability, imperative programs, state monad}
\theoremstyle{plain} 
\newcommand{\pl}{\mathrm{PL}}
\newcommand{\ha}{\mathrm{HA}}
\newcommand{\plpr}{\vdash_I}
\renewcommand{\implies}{\Rightarrow}
\newcommand{\ml}{\mathrm{SL}}
\newcommand{\ma}{\mathrm{SA}}
\newcommand{\mlpr}{\vdash_S}
\newcommand{\hlpr}{\vdash_H}
\newcommand\hr[3]{\{{#1} \cdot  {#2}  \cdot  {#3}\}}
\newcommand{\axh}{\Delta_H}
\newcommand{\axm}{\Delta_S}
\newcommand{\mt}{\mathrm{ST}}
\newcommand{\mta}{\mathrm{ST}_{N}}
\newcommand{\axmt}{\Lambda_S}
\newcommand\meta{S^\omega}
\newcommand\axmeta{\Lambda_{S^\omega}}
\newcommand\stored{\mathsf{query}}
\newcommand\solved{\mathsf{return}_P}
\newcommand\sorted{\mathsf{sorted}}
\newcommand\swrite{\mathsf{write}}
\newcommand\scalc{\mathsf{calc}}
\newcommand\sread{\mathsf{read}}
\newcommand{\swap}{\mathsf{swap}}
\newcommand{\flip}{\mathsf{flip}}
\newcommand\scount{\mathsf{count}}
\newcommand{\sort}{\mathsf{sort}}
\newcommand{\psort}{\mathsf{psort}}
\newcommand{\compare}{\mathsf{comp}}
\newcommand\nat{\mathrm{Nat}}
\newcommand\bool{\mathrm{Bool}}
\newcommand\true{\mathsf{t}}
\newcommand\false{\mathsf{f}}
\newcommand\state{S}
\newcommand{\prl}{p_0}
\newcommand{\inl}{\iota_0}
\newcommand{\prr}{p_1}
\newcommand{\inr}{\iota_1}
\newcommand{\comp}{\, \circ \, }
\newcommand{\seq}{\, \ast \,}
\newcommand{\elim}[3]{\mathsf{elim} \, {#1} \, {#2} \, {#3}}
\newcommand{\ite}[3]{\mathsf{if}\, {#1} \, \mathsf{then} \, {#2} \, \mathsf{else} \, {#3}}
\newcommand{\unit}{\mathsf{skip}}
\newcommand{\zero}{\mathsf{default}}
\newcommand{\comm}{C}
\newcommand{\suc}{\mathrm{succ}}
\newcommand{\rec}[2]{\mathsf{rec}\, {#1}\, {#2}}
\newcommand{\while}[5]{\mathsf{while}_{#5}\, {#1}\, {#2}\, {#3}\, {#4}}
\newcommand{\proj}{\mathsf{proj}}
\newcommand{\pair}[1]{\langle {#1}\rangle}
\newcommand{\case}[3]{\mathsf{case}\, {#1}\, {#2}\, {#3}}
\newcommand{\statecase}[5]{\chi_{#1}\, {#2}\, {#3}\, {#4}\, {#5}}
\newcommand{\sr}[2]{{#1}\; \mathrm{sr}\; {#2}}
\newcommand{\srtype}[1]{\tau_S({#1})}
\begin{document}

\title[Proofs as stateful programs]{Proofs as stateful programs:\\ A first-order logic with abstract Hoare triples, and an interpretation into an imperative language}
\thanks{}	

\author[T.~Powell]{Thomas Powell\lmcsorcid{0000-0002-2541-4678}}

\address{Department of Computer Science, University of Bath}	
\email{trjp20@bath.ac.uk}  






\begin{abstract}
  \noindent We introduce an extension of first-order logic that comes equipped with additional predicates for reasoning about an abstract state. Sequents in the logic comprise a main formula together with pre- and postconditions in the style of Hoare logic, and the axioms and rules of the logic ensure that the assertions about the state compose in the correct way. The main result of the paper is a realizability interpretation of our logic that extracts programs into a mixed functional/imperative language. All programs expressible in this language act on the state in a sequential manner, and we make this intuition precise by interpreting them in a semantic metatheory using the state monad. Our basic framework is very general, and our intention is that it can be instantiated and extended in a variety of different ways. We outline in detail one such extension: A monadic version of Heyting arithmetic with a wellfounded while rule, and conclude by outlining several other directions for future work.
\end{abstract}

\maketitle


\section{Introduction}
\label{sec:intro}

The Curry-Howard correspondence lies at the heart of theoretical computer science. Over the years, a multitude of different techniques for extracting programs from proofs have been developed, the majority of which translate formal proof systems into lambda calculi. As such, programs extracted from proofs are typically conceived as pure functional programs. 

Everyday programmers, on the other hand, often think and write in an imperative paradigm, in terms of instructions that change some underlying global state. This is reinforced by the fact that many of the most popular programming languages, including C and Python, lean towards this style. Imperative programs are nevertheless highly complex from a mathematical perspective, and while systems such as Hoare logic \cite{hoare:69:logic} or separation logic \cite{reynolds:02:separation} have been designed to reason about them, the formal extraction of imperative programs from proofs has received comparatively little attention.

In this paper, we propose a new idea in this direction, developing a formal system $\ml$ that enriches ordinary first-order logic with Hoare triples for reasoning about an abstract global state. Sequents will have the form $\Gamma\vdash \hr{\alpha}{A}{\beta}$, where $A$ is a formula and $\alpha,\beta$ assertions about the state, and proofs in the logic will include both ordinary introduction and elimination rules for predicate logic, together with special rules for reasoning about the state. We then construct a stateful realizability interpretation (based on Kreisel's modified realizability \cite{kreisel:59:analysis:finitetype}) that relates formulas in $\ml$ to terms in a mixed functional/imperative language $\mt$. Our main result is a soundness theorem, which confirms that whenever a formula is provable in $\ml$, we can extract a corresponding stateful realizing term in $\mt$. While our initial soundness theorem focuses on pure predicate logic, we subsequently show that it can be extended to arithmetic, where in particular we are then able to extract programs that contain both recursion and controlled while loops.

We are not the first to adapt traditional methods to extract imperative programs: A major achievement in this direction, for example, is the monograph \cite{poernomo-etal:05:book}, which sets up a variant of intuitionistic Hoare logic alongside a realizability translation into a standard imperative language. Other relevant examples include \cite{atkey:parametrised:09,chlipala:etal:09:imperative,danvy-filinski:89:functional,filliatre:03:nonfunctional,nanevski:etal:07:hoare,yoshida:etal:07:state}. However, these and almost all other prior work in this direction tend to focus on formal verification, with an eye towards using proof interpretations as a method for the synthesis of correct-by-construction software. In concrete terms, this means that the formal systems tend to be quite detailed and oriented towards program analysis, while the starting point is typically a program for which we want to construct a verification proof, rather than a proof from which we hope to extract a potentially unfamiliar program.

Our approach, on the other hand, is much more abstract, with an emphasis on potential applications in logic and proof theory. Our basic system $\ml$ makes almost no assumptions about the structure of the state and what we are allowed to do with it. Rather, we focus on producing a general framework for reasoning about `stateful formulas', which can then be instantiated with additional axioms to model concrete scenarios. The simplicity and generality of our framework is its most important feature, and we consider this work to be a first step towards a number of potentially interesting applications. For this reason, we include not only an extension of our system to a monadic theory of arithmetic, but conclude by sketching out some additional ways in which we conjecture that our logic and interpretation could be used and expanded, including the computational semantics of proofs and probabilistic logic.

We take ideas from three main sources. The first is a case study of Berger et al. \cite{berger-etal:14:imperative}, in which a realizability interpretation is used to extract a version of in-place quicksort, and where the imperative nature of the extracted program is presented in a semantic way using the state monad. While their program behaves imperatively ``by-chance'', terms extracted from our logic are forced to be imperative, and thus our framework offers one potential solution to their open problem of designing a proof calculus which only yields imperative programs. Indeed, an implementation of the insert sort algorithm is formally extracted in Section \ref{sec:arithmetic} below. Our second source of inspiration is the thesis of Birolo \cite{birolo:12:phd}, where a general monadic realizability interpretation is defined and then used to give an alternative, semantic presentation of learning-based interactive realizability \cite{aschieri:11:phd,berardi-liguoro:09:interactive}. However, our work goes beyond this in that it also involves a monadic extension of the target logic, whereas Birolo's applies to standard first-order logic. Finally, a number of ideas are taken from the author's previous work \cite{powell:18:state} on extracting stateful programs using the Dialectica interpretation. While there the state is used in a very specific and restricted way, here we use an analogous call-by-value monadic translation on terms.

It is important to stress that we do not claim that our work represents an optimal or complete method for extracting imperative programs from proofs, nor do we claim that it is superior to alternative methods, including the aforementioned works in the direction of verification, or, for instance, techniques based on Krivine's classical realizability \cite{krivine:09:realizability}, which could be viewed as imperative in nature. We simply offer what we consider to be a new and interesting perspective that emphasises abstraction and simplicity, and propose that our framework could prove valuable in a number of different contexts.

\subsubsection*{Overview of the paper}
\label{sec:intro:overview}

The main technical work that follows involves the design of three different systems, a realizability interpretation that connects them, and an instantiation of this framework in the setting of first-order arithmetic, namely:\medskip
\begin{itemize}

	\item A novel extension $\ml$ of predicate logic with abstract Hoare triples, which can be extended with additional axioms for characterising the state (Section \ref{sec:formal}).\smallskip
	
	\item A standard calculus $\mt$ for lambda terms with imperative commands, which can again be extended with additional constants for interacting with the state (Section \ref{sec:mixed}).\smallskip
	
	\item A metalanguage $\meta$ into which both $\ml$ and $\mt$ can be embedded (Section \ref{sec:meta}), which is used to formulate the realizability relation and prove its soundness (Section \ref{sec:realizability}).\smallskip
	
	\item An instantiation of $\ml$ as a theory of arithmetic, with programs extracted into an extension of $\mt$ with recursion and while loops (Section \ref{sec:arithmetic}).\medskip

\end{itemize}
Concrete examples are given, and potential applications surveyed in Section \ref{sec:future}.

\section{The system $\ml$: First-order logic with state}
\label{sec:formal}

We begin by introducing our target theory $\ml$ from which stateful programs will be extracted. This is an extension of ordinary first-order logic in the sense that the latter can always be embedded into $\ml$ (we will make this precise in Proposition \ref{stateful:extension} below). Ultimately, we are interested not so much in $\ml$ on its own, but in theories of the form $\ml+\axh+\axm$, where $\axh$ and $\axm$ are collections of (respectively non-computational and computational) axioms that together characterise the state. Several concrete examples will be given to illustrate this, and in Section \ref{sec:arithmetic} we present a variant of $\ml$ that represents a theory of first-order arithmetic with state.

Before defining $\ml$, we give a standard presentation of first-order intuitionistic predicate logic $\pl$, which serves as an opportunity to fix our basic style of formal reasoning. The language of $\pl$ consists of the logical constants $\wedge,\vee,\implies,\forall,\exists,\top,\bot$, variables $x,y,z,\ldots$, along with function symbols $f,g,h,\ldots$ and predicate symbols $P,Q,R,\ldots$, each with a fixed arity. We assume the existence of at least one constant $c$. Terms are built from variables and function symbols as usual, and formulas are built from prime formulas $P(t_1,\ldots,t_n)$, $\top$ and $\bot$ using the logical constants. We use the usual abbreviation $\neg A:\equiv A\implies \bot$. We work in a sequent style natural deduction calculus, where sequents have the form $\Gamma\plpr A$ for some context $\Gamma$ and formula $A$, and a context is a set of labelled assumptions of the form $A_1^{u_1},\ldots,A_n^{u_n}$ for pairwise distinct labels $u_i$. The axioms and rules of $\pl$ are as in Figure 1.

\begin{figure}[htp]
\label{fig:pl}
\caption{Axioms and rules of $\pl$}
\begin{equation*}
\begin{gathered}
\textbf{Propositional logic}
\\[2mm]
\Gamma\plpr A \ \ \ \mbox{if $A^u\in\Gamma$ for some $u$} 
\qquad
\Gamma\plpr \top
\\[2mm]
\vlinf{}{\wedge I}{\Gamma\plpr A\wedge B}{\Gamma\plpr A \ \ \ \Gamma\plpr B}
\qquad
\vlinf{}{\wedge E_L}{\Gamma\plpr A}{\Gamma\plpr A\wedge B}
\qquad
\vlinf{}{\wedge E_R}{\Gamma\plpr B} {\Gamma\plpr A\wedge B}
\\[2mm]
\vlinf{}{\vee I_L}{\Gamma\plpr A\vee B}{\Gamma\plpr A}
\qquad
\vlinf{}{\vee I_R}{\Gamma\plpr A\vee B}{\Gamma\plpr B}
\qquad
\vlinf{}{\vee E}{\Gamma\plpr C}{\Gamma\plpr A\vee B \ \ \ \Gamma,A^u\plpr C \ \ \ \Gamma, B^v\plpr C}
\\[2mm]
\vlinf{}{\implies I}{\Gamma \plpr A\implies B}{\Gamma,A^u\plpr B}
\qquad
\vlinf{}{\implies E}{\Gamma\plpr B}{\Gamma\plpr A\implies B \ \ \ \Gamma\plpr A}
\qquad
\vlinf{}{\bot E}{\Gamma\plpr A}{\Gamma\plpr \bot}
\\[2mm]
\textbf{Quantifier rules}
\\[2mm]
\vlinf{}{\forall I}{\Gamma\plpr \forall x A}{\Gamma\plpr A[y/x]} 
\qquad
\vlinf{}{\forall E}{\Gamma\plpr A[t/x]}{\Gamma\plpr \forall xA}
\\[2mm]
\vlinf{}{\exists I}{\Gamma\plpr \exists x A}{\Gamma\plpr A[t/x]}
\qquad
\vlinf{}{\exists E}{\Gamma\plpr C}{\Gamma\plpr \exists x A \ \ \ \Gamma,A[y/x]^u\plpr C} 
\\[2mm]
\mbox{for $\forall I$, $y\equiv x$ or $y$ not free in $A$, and $y$ not free in $\Gamma$}
\\
\mbox{for $\exists E$, $y\equiv x$ or $y$ not free in $A$, and $y$ not free in $C$ or $\Gamma$.}
\end{gathered}
\end{equation*}
\end{figure}

\subsection{Stateful first-order logic}
\label{sec:formal:monadic}

We now define our new logical system $\ml$, which is an extension of ordinary first-order logic with new state propositions. To be more precise, we extend the language of $\pl$ with a ternary operation $\hr{-}{-}{-}$, together with special state predicate symbols $p,q,r,\ldots$, which also have a fixed arity. Terms of $\ml$ are the same as those of $\pl$. On the other hand, there are two kinds of formulas in $\ml$: state formulas and main formulas. A \emph{state formula} is defined using state predicate symbols and propositional connectives as follows:\medskip
\begin{itemize}

	\item $\top$ and $\bot$ are state formulas,\smallskip

	\item if $p$ a state predicate symbol of arity $n$ and $t_1,\ldots,t_n$ are terms, then $p(t_1,\ldots,t_n)$ is a state formula,\smallskip
	
	\item if $\alpha,\beta$ are state formulas, so are $\alpha\wedge\beta$, $\alpha\vee\beta$, $\alpha\implies\beta$.\medskip

\end{itemize}
A \emph{main formula} (or just \emph{formula}) of $\ml$ is now defined as:\medskip
\begin{itemize}

	\item $\top$ and $\bot$ are formulas,\smallskip
	
	\item if $P$ is an ordinary predicate symbol of arity $n$ and $t_1,\ldots,t_n$ are terms, then $P(t_1,\ldots,t_n)$ is a formula,\smallskip
	
	\item if $A,B$ are formulas, so are $A\wedge B$, $A\vee B$ and $\exists x A$,\smallskip
	
	\item if $A,B$ are formulas and $\alpha,\beta$ \emph{state} formulas, then $A\implies \hr{\alpha}{B}{\beta}$ and $\forall x\, \hr{\alpha}{A}{\beta}$ are formulas.\medskip

\end{itemize}  
The notions of free and bound variables, along with substitution $\alpha[t/x]$ and $A[t/x]$ can be easily defined for both state and main formulas. 

Analogous to the construction of formulas, our basic proof system uses the auxiliary notion of a \emph{state proof} in order to define a main proof. A \emph{state sequent} has the form $\Gamma\hlpr \alpha$ where $\alpha$ is a state formula and $\Gamma$ a set of labelled state formulas. A proof of $\Gamma\hlpr \alpha$ in $\ml$ is built from the axioms and rules of \emph{classical propositional logic} i.e. the propositional axioms and rules as set out in Figure 1 plus the law of excluded middle $\Gamma\hlpr \alpha\vee \neg \alpha$, together with a set $\axh$ of as yet unspecified state axioms of the form $\Gamma\hlpr\alpha$.

A main sequent of $\ml$ has the form $\Gamma\mlpr \hr{\alpha}{A}{\beta}$, where $A$ is a formula and $\alpha,\beta$ state formulas, and $\Gamma$ is a set of labelled main formulas. A proof of $\Gamma\mlpr \hr{\alpha}{A}{\beta}$ in $\ml$ uses the axioms and rules given in Figure 2, together with a set $\axm$ of additional axioms.

\begin{figure}[htp]
\label{fig:ml}
\caption{Axioms and rules of $\ml$}
\begin{equation*}
\begin{gathered}
\textbf{Propositional axioms and rules}
\\[2mm]
\Gamma\mlpr \hr{\alpha}{A}{\alpha} \ \ \ \mbox{if $A^u\in\Gamma$ for some $u$} 
\qquad
\Gamma\mlpr \hr{\alpha}{\top}{\alpha}
\\[2mm]
\vlinf{}{\wedge_S I}{\Gamma\mlpr \hr{\alpha}{A\wedge B}{\gamma}}{\Gamma\mlpr \hr{\alpha}{A}{\beta} \ \ \ \Gamma\mlpr \hr{\beta}{B}{\gamma}}
\\[2mm]
\vlinf{}{\wedge_S E_L}{\Gamma\mlpr \hr{\alpha}{A}{\beta}}{\Gamma\mlpr \hr{\alpha}{A\wedge B}{\beta}}
\qquad
\vlinf{}{\wedge_S E_R}{\Gamma\mlpr \hr{\alpha}{B}{\beta}} {\Gamma\mlpr \hr{\alpha}{A\wedge B}{\beta}}
\\[2mm]
\vlinf{}{\vee_S I_L}{\Gamma\mlpr \hr{\alpha}{A\vee B}{\beta}}{\Gamma\mlpr \hr{\alpha}{A}{\beta}}
\qquad
\vlinf{}{\vee_S I_R}{\Gamma\mlpr \hr{\alpha}{A\vee B}{\beta}}{\Gamma\mlpr \hr{\alpha}{B}{\beta}}
\\[2mm]
\vlinf{}{\vee_S E}{\Gamma\mlpr \hr{\alpha}{C}{\gamma}}{\Gamma\mlpr \hr{\alpha}{A\vee B}{\beta} \ \ \ \Gamma,A^u\mlpr \hr{\beta}{C}{\gamma} \ \ \ \Gamma, B^v\mlpr \hr{\beta}{C}{\gamma}}
\\[2mm]
\vlinf{}{\implies_S I}{\Gamma \mlpr \hr{\gamma}{A\implies \hr{\alpha}{B}{\beta}}{\gamma}}{\Gamma,A^u\mlpr \hr{\alpha}{B}{\beta}}
\qquad
\vlinf{}{\implies_S E}{\Gamma\mlpr \hr{\alpha}{B}{\delta}}{\Gamma\mlpr \hr{\alpha}{A\implies \hr{\gamma}{B}{\delta}}{\beta} \ \ \ \Gamma\mlpr \hr{\beta}{A}{\gamma}}
\\[2mm]
\vlinf{}{\bot_S E}{\Gamma\mlpr\hr{\alpha}{A}{\gamma}}{\Gamma\mlpr \hr{\alpha}{\bot}{\beta}} \ \ \ \mbox{for $\gamma$ arbitrary}
\\[2mm]
\textbf{Quantifier rules}
\\[2mm]
\vlinf{}{\forall_S I}{\Gamma\mlpr \hr{\gamma}{\forall x\, \hr{\alpha}{A}{\beta}}{\gamma}}{\Gamma\mlpr \hr{\alpha[y/x]}{A[y/x]}{\beta[y/x]}} 
\qquad
\vlinf{}{\forall_S E}{\Gamma\mlpr \hr{\alpha}{A[t/x]}{\gamma[t/x]}}{\Gamma\mlpr \hr{\alpha}{\forall x\, \hr{\beta}{A}{\gamma}}{\beta[t/x]}}
\\[2mm]
\vlinf{}{\exists_S I}{\Gamma\mlpr \hr{\alpha}{\exists x A}{\beta}}{\Gamma\mlpr \hr{\alpha}{A[t/x]}{\beta}}
\qquad
\vlinf{}{\exists_S E}{\Gamma\mlpr \hr{\alpha}{C}{\gamma}}{\Gamma\mlpr \hr{\alpha}{\exists x A}{\beta} \ \ \ \Gamma,A[y/x]^u\mlpr \hr{\beta}{C}{\gamma}} 
\\[2mm]
\mbox{for $\forall_S I$, $y\equiv x$ or $y$ not free in $A$, $\alpha$, $\beta$, and $y$ not free in $\Gamma$}
\\
\mbox{for $\exists_SE$, $y\equiv x$ or $y$ not free in $A$, and $y$ not free in $C$, $\alpha$, $\beta$, $\gamma$ or $\Gamma$.}
\\[2mm]
\textbf{Basic Hoare rules}
\\[2mm]
\vlinf{}{cons}{\Gamma\mlpr\hr{\alpha}{A}{\delta}}{\alpha\hlpr\beta \ \ \ \Gamma\mlpr\hr{\beta}{A}{\gamma} \ \ \ \gamma\hlpr\delta}
\\[2mm]
\vlinf{}{cond}{\Gamma\mlpr\hr{\gamma}{A}{\delta}}{\hlpr \alpha\vee \beta \ \ \ \Gamma\mlpr\hr{\alpha\wedge\gamma}{A}{\delta} \ \ \ \Gamma\mlpr\hr{\beta\wedge\gamma}{A}{\delta}}
\\[2mm]
\textbf{Additional axioms}
\\[2mm]
\mbox{state axioms $\axh$ of the form $\Gamma\hlpr\alpha$}
\\
\mbox{main axioms $\axm$ of the form $\Gamma\mlpr\hr{\alpha}{A}{\beta}$}
\end{gathered}
\end{equation*}
\end{figure}
We now make precise what we mean when we characterise $\ml$ as an extension of standard first-order logic. The following is provable with an easy induction over derivations in $\pl$:

\begin{prop}
\label{stateful:extension}
For any formula $A$ of $\pl$ and state formula $\alpha$, define the main formula $A_\alpha$ of $\ml$ by
\begin{itemize}

	\item $Q_\alpha:=Q$ for $Q$ atomic,
	
	\item $(A\wedge B)_\alpha:=A_\alpha\wedge B_\alpha$, $(A\vee B)_\alpha:=A_\alpha\vee B_\alpha$ and $(\exists x\, A)_\alpha:=\exists x\, A_\alpha$,
	
	\item $(A\implies B)_\alpha:=A_\alpha\implies\hr{\alpha}{B_\alpha}{\alpha}$ and $(\forall x\, A)_\alpha:=\forall x\, \hr{\alpha}{A_\alpha}{\alpha}$.

\end{itemize}
Then whenever $\Gamma\plpr A$ is provable in $\pl$, we have that $\Gamma_\alpha,\Delta\mlpr\hr{\alpha}{A_\alpha}{\alpha}$ is provable in $\ml$, where $\Delta$ is arbitrary and $\Gamma_\alpha:=(A_1)^{u_1}_\alpha,\ldots,(A_n)^{u_n}_\alpha$ for $\Gamma:=A_1^{u_1},\ldots,A_n^{u_n}$.
\end{prop}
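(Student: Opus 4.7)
The plan is to proceed by induction on the derivation of $\Gamma\plpr A$ in $\pl$. The guiding idea is that each rule of $\pl$ can be mirrored by its subscripted counterpart in $\ml$ after collapsing every intermediate pre-, mid- and postcondition to the single fixed state formula $\alpha$. In this way the Hoare-style assertions carry no real information along the translated branch, and the resulting $\ml$-derivation has essentially the same combinatorial shape as the original $\pl$-derivation. The arbitrariness of $\Delta$ is used only to allow weakening at the leaves and to make the induction hypothesis strong enough in the branching rules.

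For the propositional fragment this is essentially mechanical. The two axiom schemes ($A^u \in \Gamma$ and $\Gamma\plpr\top$) match directly. For the introduction and elimination rules one applies the $\ml$ rule with all its state parameters ($\beta$, $\gamma$, $\delta$) set equal to $\alpha$: e.g.\ $\wedge I$ is matched by $\wedge_S I$ with $\beta := \alpha$, and $\implies I$ by $\implies_S I$, where the crucial observation is that $(A\implies B)_\alpha$ is by definition of the shape $A_\alpha\implies\hr{\alpha}{B_\alpha}{\alpha}$, which is exactly the template required by $\implies_S I$ and $\implies_S E$. The cases for $\vee$, $\wedge E$ and $\bot E$ (via $\bot_S E$ with $\gamma := \alpha$) are analogous, and the $cons$ and $cond$ rules are not needed at all for this translation.

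The quantifier cases are where the main obstacle lies. To apply $\forall_S I$ from the inductive hypothesis one needs the eigenvariable $y$ not only to avoid $\Gamma_\alpha,\Delta$ (inherited from the $\pl$ side condition) but also to be absent from the parameter $\alpha$. Dually, in the $\forall E$ and $\exists E$ cases the $\ml$-conclusions naturally involve the substituted postcondition $\alpha[t/x]$, which should reduce back to $\alpha$; and one must check the commutation $(A[t/x])_\alpha \equiv A_\alpha[t/x]$, which in turn requires that $x$ not occur free in any occurrence of $\alpha$ inserted by the translation. All of these conditions can be met by the standard expedient of working modulo alpha-conversion: bound variables in $A$ and in the derivation are renamed, if necessary, to avoid the free variables of $\alpha$. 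With this bookkeeping in place, each of $\forall I$, $\forall E$, $\exists I$, $\exists E$ translates into its subscripted analogue with all state parameters collapsed to $\alpha$, closing the induction.
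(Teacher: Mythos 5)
Your proposal is correct and coincides with the paper's own (unwritten) argument: the paper merely asserts the proposition follows by ``an easy induction over derivations in $\pl$'', and your induction --- matching each $\pl$-rule with its subscripted $\ml$-counterpart with all state parameters collapsed to $\alpha$, and handling the eigenvariable/substitution hygiene for the quantifier rules (including $y$ avoiding the arbitrary $\Delta$ and the free variables of $\alpha$) by renaming --- is exactly that induction, with the quantifier bookkeeping spelled out more carefully than the paper bothers to.
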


\subsection{The intuition behind $\ml$}
\label{sec:formal:intuition}

The intended semantic meaning of $\Gamma\hlpr\alpha$ is that $\alpha$ can be inferred from the assumptions $\Gamma$ for any fixed state. More specifically, if we imagine a semantic variant $[\alpha](\pi)$ of each state formula where now the dependency on an underlying state $\pi$ is made explicit, the semantics of $\Gamma\hlpr\alpha$ is just
\begin{equation*}
[\Gamma](\pi)\implies [\alpha](\pi)
\end{equation*}
On the other hand, the intended meaning of $\Gamma\mlpr\hr{\alpha}{A}{\beta}$ is that from assumptions $\Gamma$, if $\alpha$ holds with respect to some initial state, then we can infer that $A$ is true and $\beta$ holds with respect to some modified state, or more precisely:
\begin{equation}
\label{eqn:implication:semantics}
[\Gamma]\implies (\exists \pi\, [\alpha](\pi)\implies ([A]\wedge \exists \pi'\, [\beta](\pi')))
\end{equation}
In particular, the computational interpretation of (\ref{eqn:implication:semantics}) above will be a program that takes some input state $\pi$ satisfying $[\alpha](\pi)$ and returns a realizer-state pair $\pair{x,\pi'}$ such that $x$ realizes $A$ and $[\beta](\pi')$ holds.

Our semantic interpretation $[\cdot]$ will be properly defined in Section \ref{sec:meta}. Crucially, in $\ml$ the state is \emph{implicit}, and so there are no variables or terms of state type. The state will rather be made explicit in our metatheory $\meta$. The main axioms and rules of $\ml$ simply describe how this semantic interpretation propagates in a call-by-value manner through the usual axioms and rules of first-order logic. The state itself is brought into play through the Hoare rules along with the additional axioms $\axh$ and $\axm$. 

To give a more detailed explanation, consider the introduction rule $\wedge_S I$. Semantically, the idea is that if 
\begin{equation*}
\exists \pi\, [\alpha](\pi)\implies ([A]\wedge \exists \pi_1\, [\beta](\pi_1)) \ \ \ \mbox{and} \ \ \ \exists \pi_1\, [\beta](\pi_1)\implies ([B]\wedge \exists \pi_2\, [\gamma](\pi_2))
\end{equation*}
both hold, then we can infer
\begin{equation*}
\exists \pi\, [\alpha](\pi)\implies ([A]\wedge [B]\wedge \exists \pi_2\, [\gamma](\pi_2))
\end{equation*}
and this can be regarded as a `stateful' version of the usual conjunction introduction rule. In particular, we note that this is no longer symmetric: Informally speaking, $[A]$ is `proven first', followed by $[B]$. Under our realizability interpretation, this rule will correspond to realizer for $\hr{\alpha}{A}{\beta}$ being sequentially composed with a realizer for $\hr{\beta}{B}{\gamma}$. 

Similarly, for the stateful quantifier rule $\forall_SI$, the premise $\hr{\alpha[y/x]}{A[y/x]}{\beta[y/x]}$ corresponds semantically to
\begin{equation*}
\exists \pi\, [\alpha[y/x]](\pi)\implies [A[y/x]]\wedge \exists \pi'\, [\beta[y/x]](\pi')
\end{equation*}
and so by the standard quantifier rule of ordinary predicate logic we would have
\begin{equation*}
\forall x(\exists \pi\, [\alpha](\pi)\implies [A]\wedge \exists \pi'\, [\beta](\pi'))
\end{equation*}
From this we can directly infer
\begin{equation*}
\exists \pi\, [\gamma](\pi)\implies \forall x(\exists \pi\, [\alpha](\pi)\implies [A]\wedge \exists \pi'\, [\beta](\pi'))\wedge \exists \pi\, [\gamma](\pi)
\end{equation*}
for arbitrary $\gamma$, and the above now this has the right form, namely the semantic interpretation of $\hr{\gamma}{\forall x\, \hr{\alpha}{A}{\beta}}{\gamma}$.

We note that the traditional rules of Hoare logic correspond to certain simple cases of our rules. For example, the composition rule could be viewed as special case of $\wedge_SI$ for $A=B=\top$, more specifically the derivation:
\begin{equation*}
\vlderivation{
							\vlin{}{\wedge_SE_L}
								{\Gamma\mlpr\hr{\alpha}{\top}{\gamma}}
								{
\vliin{}{\wedge_SI}
										{\Gamma\mlpr\hr{\alpha}{\top\wedge\top}{\gamma}}
										{\vlhy{\Gamma\mlpr\hr{\alpha}{\top}{\beta}}}
										{\vlhy{\Gamma\mlpr\hr{\beta}{\top}{\gamma}}}
}
}
\end{equation*}

In a similar spirit, the two Hoare rules of $\ml$ correspond to the \emph{consequence} and \emph{conditional} rules of ordinary Hoare logic, with the traditional conditional rule falling out as a special case of ours since we assume $\Gamma\hlpr \alpha\vee \neg \alpha$. In Section \ref{sec:arithmetic} we extend this further by adding a controlled while loop to our logic. But for now, we illustrate our logic with some very straightforward scenarios.

\begin{exa}[Simple query and return]
\label{ex:readwrite:logic}
Consider a very simple state, on which we can perform the following actions:\medskip
\begin{enumerate}

	\item Store any value from our domain of discourse in some query location.\smallskip
	
	\item For the current value $x$ in the query location, return a suitable answer $y$ such that $P(x,y)$ holds for some fixed binary predicate system of the logic, and store this value. \smallskip
	
	\item Retrieve the computed value $y$ from the state.\medskip

\end{enumerate}
We formalise those three actions by including two unary state predicates $\stored$ and $\solved$, where $\stored(x)$ denotes that $x$ is currently stored in the query location, and $\solved(x)$ denotes that some $y$ satisfying $P(x,y)$ has been returned. We would then add the following axioms to $\axm$, which intuitively represent each of the above actions:\medskip
\begin{enumerate}

	\item $\Gamma\mlpr\hr{\alpha}{\top}{\stored(x)}$ where $\alpha$ ranges over all state formulas,\smallskip
	
	\item $\Gamma\mlpr\hr{\stored(x)}{\top}{\solved(x)}$\smallskip
	
	\item $\Gamma\mlpr\hr{\solved(x)}{\exists y\, P(x,y)}{\top}$\medskip

\end{enumerate}

The second rule should be regarded as representing an abstract updating of the state where the return value is stored somewhere, and the third the act of retrieval, where following the update we can now formally prove $\exists y\, P(x,y)$. Note that the computational aspects of this interpretation will only become apparent under the realizability interpretation, where a realizer for $\hr{\solved(x)}{\exists y\, P(x,y)}{\top}$ will be a function that accesses any state satisfying $\solved(x)$ and produces as output the $y$ satisfying $P(x,y)$ (See Example \ref{ex:readwrite:program}).

We can now, for example, derive the following theorem in $\ml+\axh+\axm$ (for $\axh=\emptyset$), where $\alpha,\beta$ are any state formulas:
\begin{equation*}
\label{ex:readwrite:theorem}
\mlpr\hr{\beta}{\forall x\, \hr{\alpha}{\exists y\, P(x,y)}{\top}}{\beta}
\end{equation*}
The derivation below is essentially a proof of $\forall x\exists y\, P(x,y)$ that, necessarily, utilising properties of the state:
\begin{equation*}
\vlderivation{
	\vlin{}{\forall_SI}
		{\mlpr\hr{\beta}{\forall x\, \hr{\alpha}{\exists y\, P(x,y)}{\top}}{\beta}}
		{
			\vlin{}{\wedge_SE_L}
				{\mlpr\hr{\alpha}{\exists y\, P(x,y)}{\top}}
				{
					\vliin{}{\wedge_SI}
						{\mlpr\hr{\alpha}{\top\wedge\exists y\, P(x,y)}{\top}}
						{
							\vlin{}{\wedge_SE_L}
								{\mlpr\hr{\alpha}{\top}{\solved(x)}}
								{
									\vliin{}{\wedge_SI}
										{\mlpr\hr{\alpha}{\top\wedge\top}{\solved(x)}}
										{\vlhy{\mlpr\hr{\alpha}{\top}{\stored(x)}}}
										{\vlhy{\mlpr\hr{\stored(x)}{\top}{\solved(x)}}}
								}
						}
						{
							\vlhy{\!\!\!\!\!\!\!\!\!\!\!\mlpr\hr{\solved(x)}{\exists y\, P(x,y)}{\top}}    
						}
				}
		}
}
\end{equation*}
We note that while state formulas and actions are used in the \emph{proof}, if we set $\alpha=\beta=\top$ then the components of the \emph{theorem} itself are just formulas in ordinary first-order logic. A program corresponding to this derivation will be formally extracted in Example \ref{ex:readwrite:program}, but referring to the semantic explanations above, we can view each instance of $\wedge_SI$ in the proof as a sequential composition of state actions (with $\wedge_SE$ just cleaning up the central logical formula), and finally $\forall_S I$ is just the stateful version of the usual $\forall$-introduction rule.
\end{exa}

\begin{exa}[Fixed-length array sorting]
\label{ex:sort:logic}
Let us now consider our state as an array of length three, and elements in that array as having some order structure. We formalise this in $\ml$ by introducing $1,2,3$ as constants representing our three locations, along with two state predicates: a binary predicate $\leq$ for comparing elements at locations $l$ and $l'$, and a nullary predicate $\sorted$ that declares that the state is sorted. These can be characterised by adding the following axiom schemes, but to $\axh$ rather than $\axm$ as they do not represent state \emph{actions}:
\begin{equation*}
\begin{aligned}
	&\Gamma\hlpr1\leq 2\wedge 2\leq 3\implies\sorted\\
	&\Gamma\hlpr l\leq l'\vee l'\leq l \ \ \ \mbox{where $l,l'$ range over $\{1,2,3\}$}
\end{aligned}
\end{equation*}
We then allow a single action on our array, namely the swapping of a pair of elements in the list. Suppose that $\alpha$ is a state formula of the form
\begin{equation}
\label{eqn:conjform}
\alpha:\equiv l_1\leq l'_1\wedge\ldots \wedge l_n\leq l'_n
\end{equation}
where $l_i,l_i$ range over locations $\{1,2,3\}$. Now for $l,l'\in \{1,2,3\}$ let $\alpha[l\leftrightarrow l']$ denote $\alpha$ where all instances of $l$ and $l'$ are swapped, so that if e.g. $\alpha=3\leq 2\wedge 1\leq 2\wedge 1\leq 3$ then
\begin{equation*}
\alpha[2\leftrightarrow 3]=2\leq 3\wedge 1\leq 3\wedge 1\leq 2
\end{equation*}
We axiomatise the swapping of the values in locations of some arbitrary pair $l,l'\in \{1,2,3\}$ by adding to $\axm$ all instances of
\begin{equation*}
\Gamma\mlpr\hr{\alpha}{\top}{\alpha[l\leftrightarrow l']}
\end{equation*} 
where $\alpha$ ranges over state formulas of the form (\ref{eqn:conjform}). The statement that all arrays of length three can be sorted is then formulated as
\begin{equation*}
\label{ex:sort:theorem}
\mlpr\hr{\top}{\top}{\sorted}
\end{equation*}
Let us now give a proof of this statement in $\ml+\axh+\axm$. As with the previous example, we emphasise that the underlying computation represented by this proof will only become visible through the realizability interpretation. 

First, let $\alpha:=1\leq 2\wedge 1\leq 3$, and define $\mathcal{D}_1$ as
\begin{equation*}
\vlderivation{
\vliin{}{cond[2\leq 3\vee 3\leq 2]}
{\mlpr\hr{\alpha}{\top}{\sorted}}
{
\vlin{}{cons}{\mlpr\hr{2\leq 3\wedge \alpha}{\top}{\sorted}}
	{\vlhy{\mlpr\hr{2\leq 3\wedge\alpha}{\top}{2\leq 3\wedge\alpha}}}
}
{
\vlin{}{cons}{\mlpr\hr{3\leq 2\wedge \alpha}{\top}{\sorted}}{\vlin{}{2\leftrightarrow 3}{\mlpr\hr{3\leq 2\wedge \alpha}{\top}{2\leq 3\wedge 1\leq 3\wedge 1\leq 2}}{}}
}
}
\end{equation*}
where for the left instance of $cons$ we use $2\leq 3\wedge \alpha\hlpr\sorted$, in the right that $2\leq 3\wedge 1\leq 3\wedge 1\leq 2\hlpr\sorted$, and for the final instance of $cond$ we use $\hlpr 2\leq 3\vee 3\leq 2$. Now let $\mathcal{D}_2$ be defined by
\begin{equation*}
\vlderivation{
\vlin{}{\wedge_S E_L}
	{\mlpr\hr{2\leq 1\wedge 2\leq 3}{\top}{\sorted}}
	{\vliin{}{\wedge_S I}{\mlpr\hr{2\leq 1\wedge 2\leq 3}{\top\wedge \top}{\sorted}}
		{\vlin{}{1\leftrightarrow 2}{\mlpr\hr{2\leq 1\wedge 2\leq 3}{\top}{1\leq 2\wedge 1\leq 3}}{\vlhy{}}}
		{\vlin{}{}{\mlpr\hr{1\leq 2\wedge 1\leq 3}{\top}{\sorted}}{\vlhy{\mathcal{D}_1}}}
	}	
}
\end{equation*}
Then we have $\mathcal{D}_3$:
\begin{equation*}
\vlderivation{
	\vliin{}{cond[2\leq 1\vee 1\leq 2]}{\mlpr\hr{2\leq 3}{\top}{\sorted}}
		{\vlin{}{}{\mlpr\hr{2\leq 1\wedge 2\leq 3}{\top}{\sorted}}{\vlhy{\mathcal{D}_2}}}
		{\vlin{}{cons}{\mlpr\hr{1\leq 2\wedge 2\leq 3}{\top}{\sorted}}{\vlhy{\hr{1\leq 2\wedge 2\leq 3}{\top}{1\leq 2\wedge 2\leq 3}}}}
}
\end{equation*}
where here $cond$ uses $\hlpr 2\leq 1\vee 1\leq 2$, and finally
\begin{equation*}
\vlderivation{
	\vlin{}{\wedge_SE_L}
		{\mlpr\hr{\top}{\top}{\sorted}}
		{
			\vliin{}{\wedge_SI}{\mlpr\hr{\top}{\top\wedge\top}{\sorted}}
				{
					\vliin{}{cond[2\leq 3\vee 3\leq 2]}{\mlpr\hr{\top}{\top}{2\leq 3}}
			{\vlhy{\mlpr\hr{2\leq 3}{\top}{2\leq 3}}}
			{\vlin{}{2\leftrightarrow 3}{\mlpr\hr{3\leq 2}{\top}{2\leq 3}}{\vlhy{}}}	
				}
				{
				\vlin{}{}{\mlpr\hr{2\leq 3}{\top}{\sorted}}{\vlhy{\mathcal{D}_3}}
				}
		}
}
\end{equation*}
In contrast to Example \ref{ex:readwrite:logic} above, this is an example of a purely imperative proof that involves no propositional formulas other than $\top$. As we will see in Example \ref{ex:readwrite:program}, the proof corresponds to a purely imperative program. 
\end{exa}

\section{The system $\mt$: A simple functional/imperative term calculus}
\label{sec:mixed}

We now define our calculus $\mt+\axmt$ whose terms will represent realizers for proofs in $\ml+\axh+\axm$. This is a standard typed lambda calculus for mixed functional and imperative programs, and is defined to include basic terms together with additional constants in some set $\axmt$, where the latter are intuitively there to realize the axioms in $\axm$. Semantics for the terms will be given via a monadic translation into the metalanguage defined in the next section. Types are defined by the grammar 
\begin{equation*}
X ::= D \, | \, \comm \, | \, X\times X\, | \, X+X \, | \, X\to X
\end{equation*}
while basic terms are defined as
\begin{equation*}
\begin{aligned}
e ::= &\unit\, | \, \zero_X \, | \, c \, | \, f \, | \, x \, | \, \prl(e) \, | \, \prr(e) \, | \, e\comp e\, | \, \inl(e) \, |\, \inr(e)\, |\\
& \elim{e}{e}{e} \, | \, \lambda x.e \, | \, e\, e \, | \, \ite{\alpha}{e}{e}   
\end{aligned}
\end{equation*}
where $f$ ranges over all function symbols of $\ml$, $c$ are constants in $\axmt$, and $\alpha$ ranges over state formulas of $\ml$. Typing derivations of the form $\Gamma\vdash t : X$ are given in Figure 3, where $\Gamma$ is a set of typed variables. Note that the types of constants $c\in \axmt$ are also left unspecified. The type $\comm$ should be interpreted as a type of commands that act on the state but don't return any values.  \smallskip
\begin{figure}[tp]
\label{fig:mt}
\caption{Typing derivations for $\mt+\axmt$}
\begin{equation*}
\begin{gathered}
\Gamma\vdash f:D^n\to D \ \ \ \mbox{where $f$ has arity $n$}
\qquad
\Gamma\vdash c:X\\[2mm]
\qquad
\Gamma\vdash x:X \ \ \ \mbox{if $x:X$ in $\Gamma$}
\qquad
\Gamma\vdash \unit:\comm
\\[2mm]
\frac{\Gamma\vdash s:X \ \ \ \Gamma\vdash t:Y}{\Gamma\vdash s\comp t:X\times Y}
\qquad
\frac{\Gamma\vdash t:X\times Y}{\Gamma\vdash \prl(t):X}
\qquad
\frac{\Gamma\vdash t:X\times Y}{\Gamma\vdash \prr(t):Y}
\\[2mm]
\frac{\Gamma\vdash t:X}{\Gamma\vdash \inl(t):X+Y}
\qquad
\frac{\Gamma\vdash t:Y}{\Gamma\vdash \inr(t):X+Y}
\qquad
\\[2mm]
\frac{\Gamma\vdash r:X+Y \ \ \ \Gamma\vdash s:X\to Z  \Gamma\vdash t:Y\to Z}{\Gamma\vdash \elim{r}{s}{t}\vdash Z}
\\[2mm]
\frac{\Gamma,x:X\vdash t:Y}{\Gamma\vdash \lambda x.t:X\to Y}
\qquad
\frac{\Gamma\vdash t:X\to Y \ \ \ \Gamma\vdash s:X}{\Gamma\vdash ts:Y}
\qquad
\Gamma\vdash \zero_X:X
\\[2mm]
\frac{\Gamma\vdash s:X \ \ \ \Gamma\vdash t:X \ \ \ \mbox{$x:D\in\Gamma$ for all free variables of $\alpha$}}{\Gamma\vdash \ite{\alpha}{s}{t}:X}
\end{gathered}
\end{equation*}
\end{figure}

A \emph{denotational} semantics of $\mt+\axmt$, which is what we require for our realizability interpretation, will be specified in detail in Section \ref{sec:meta} below. Operationally, the idea is that terms $t:X$ of $\mt+\axmt$ take some input state $\pi$ and evaluates to some value $v$ and final state $\pi_1$ in a call-by-value way. We choose our notation to reflect the underlying stateful computations: For example, $s\circ t$ is used instead of what would normally be a pairing operation, because this plays the role of simulating composition in the stateful setting. Indeed, it will be helpful to consider a derived operator for sequential composition that also incorporates the `cleanup' seen in the examples above, and corresponds to an instance of $\wedge_SI$ followed by an instance of $\wedge_SE_L$: 
\begin{defi}
\label{def:star}
If $\Gamma\vdash s:\comm$ and $\Gamma\vdash t:X$ then $\Gamma\vdash s\seq t:=\prr(s\comp t):X$. In particular, if $\Gamma\vdash t:\comm$ then $\Gamma\vdash s\seq t:\comm$.
\end{defi} 
A full exploration of the operational semantics of $\mt+\axmt$ along with correctness with respect to the denotational semantics defined in Section \ref{sec:meta:terms} is left to future work, but we provide some further insight in Remark \ref{rem:operational} below, including a more detailed discussion of the relationship between $\circ$ and pairing.

\section{A monadic embedding of $\ml$ and $\mt$ into a metatheory \texorpdfstring{$\meta$}{Sω}}
\label{sec:meta}

We now give a semantic interpretation of both state formulas of $\ml+\axh+\axm$ and terms in $\mt+\axmt$ into a standard higher-order, many sorted logic $\meta+\axmeta$. 

\subsection{The system \texorpdfstring{$\meta$}{Sω}}
\label{sec:meta:theory}

This logic contains typed lambda terms along with equational axioms for reasoning about them, together with the usual axioms and rules of many-sorted predicate logic. Because most aspects of the logic are completely standard, and in any case it is purely a verifying system, we are less detailed in specifying it. Types are defined as follows:
\begin{equation*}
X::=D\, | \, 1 \, | \, \bool \, | \, \state \, | \, X\times X\, | \, X\to X
\end{equation*}
where $D$ represents objects in the domain of $\ml$ (just as in $\mt$), $\bool$ a type of booleans, and states are now explicitly represented as objects of type $\state$. Our metatheory is an equational calculus, with an equality symbol $=_X$ for all types. Typed terms include:\medskip
\begin{itemize}

	\item variables $x,y,z,\ldots$ for each type, where we denote state variables by $\pi,\pi_1,\pi_2,\ldots$\smallskip
	
	\item a constant $f:D^n\to D$ for each $n$-ary function symbol of $\ml$,\smallskip
	
	\item additional, as yet unspecified constant symbols $c:X$ for interpreting objects in $\axmt$, along with axioms that characterise them,\smallskip
	
	\item a unit element $():1$ along with the axiom $x=()$,\smallskip
	
	\item boolean constants $\true$ and $\false$, with the axiom $x=_\bool \true\vee x=_\bool\false$,\smallskip
	
	\item pairing $\pair{s,t}$ and projection $\proj_0 (t)$, $\proj_1 (t)$ operators, with the usual axioms,\smallskip
	
	\item terms formed by lambda abstraction and application, with the rule $(\lambda x.t)s=t[s/x]$,\smallskip
	
	\item for each type $X$ a case operator $\case{(b)}{(s)}{(t)}$ for $b:\bool$ and $s,t:X$, with axioms $\case{\false}{x}{y}=x$ and $\case{\true}{x}{y}=y$.\medskip

\end{itemize}
We sometimes write $x^X$ instead of $x:X$, and we use abbreviations such as $\pair{x,y,z}$ for $\pair{x,\pair{y,z}}$. Atomic formulas of $\meta$ include all ordinary predicate symbols $P,Q,R,\ldots$ of $\ml$ as atomic formulas, where an $n$-ary predicate $P$ in $\ml$ takes arguments of type $D^n$ in $\meta$, along with predicates $p,q,r,\ldots$ for each state predicate symbol of $\ml$, but now, if $p$ is an $n$-ary state predicate in $\ml$, $p$ takes arguments of type $D^n\times\state$ in $\meta$. General formulas are built using the usual logical connectives, including quantifiers for all types. The axioms and rules of $\meta$ include the axioms of rules of predicate logic (now in all finite types), axioms for the terms, along with the usual equality axioms (including full extensionality). Because $\meta$ acts as a verifying theory, we freely use strong axioms (such as extensionality), without concerning ourselves with the minimal such system that works. 

\subsection{The embedding \texorpdfstring{$[\cdot]$}{[⋅]} on state formulas of $\ml$}
\label{sec:meta:formulas}

The main purpose of our metalanguage is to allow us to reason semantically about $\ml$ and $\mt$. To do this, we introduce an embedding of state formulas of $\ml$ and terms of $\mt$ into $\meta$. We use the same notation $[\cdot]$ for both, as there is no danger of ambiguity. An informal explanation of the meaning of $[\cdot]$ on formulas is given in Section \ref{sec:formal:intuition}. 

An important point to highlight here is that under the semantics, the arity of state formulas change: For any state formula $\alpha$, the interpreted formula $[\alpha]$ now contains a single additional free variable $\pi:S$ representing the underlying state. As mentioned earlier, state is implicit in our logic and term languages, but needs to be made explicit under the semantics. 
\begin{defi}
\label{def:interpret:terms}
For each term $t$ of $\ml$, there is a natural interpretation of $t$ as a term of type $D$ in $\mt$, namely $x\mapsto x:D$ and $f(t_1,\ldots,t_n)\mapsto f(t_1\circ\cdots\circ t_n):D$. Similarly, there is a natural interpretation of $t$ into $\meta$, this time with $f(t_1,\ldots,t_n)\mapsto f(\pair{t_1,\ldots,t_n})$. We use the same notation for $t$ in each of the three systems, as there is no risk of ambiguity.
\end{defi}
\begin{defi}
\label{def:emb:formulas}
For each \emph{state} formula $\alpha$ of $\ml$, we define a formula $[\alpha](\pi)$ of $\meta$, whose free variables are the same as those of $\alpha$ (but now typed with type $D$) with the potential addition of a single state variable $\pi$, as follows:\medskip
\begin{itemize}

	\item $[\top](\pi):=\top$ and $[\bot](\pi):=\bot$,\smallskip
	
	\item $[p(t_1,\ldots,t_n)](\pi):=p(t_1,\ldots,t_n,\pi)$,\smallskip
	
	\item $[\alpha\wedge\beta](\pi):=[\alpha](\pi)\wedge [\beta](\pi)$, and similarly for $\alpha\vee\beta$ and $\alpha\implies \beta$.\medskip

\end{itemize}
\end{defi}
The following Lemma is easily proven using induction over propositional derivations. 
\begin{lem}
\label{lem:state:emb:sound}
If $\Gamma\hlpr\alpha$ in $\ml$ then $[\alpha](\pi)$ is provable in $\meta$ from the assumptions $[\Gamma](\pi)$, where $[\Gamma](\pi):=[\alpha_1](\pi),\ldots,[\alpha_n](\pi)$ for $\Gamma:=\alpha_1,\ldots,\alpha_n$. This extends to proofs in $\ml+\axh$ provided that the embedding of any axiom in $\axh$ is provable in $\meta+\axmeta$.
\end{lem}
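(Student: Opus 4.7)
The plan is to proceed by a straightforward induction on the derivation of $\Gamma \hlpr \alpha$ in $\ml$. The key observation making this essentially mechanical is that the embedding $[\cdot]$ commutes with every propositional connective, merely threading the state variable $\pi$ through as a free parameter: $[\alpha \wedge \beta](\pi) \equiv [\alpha](\pi) \wedge [\beta](\pi)$, and likewise for $\vee$ and $\implies$, while $[\top]$ and $[\bot]$ map to $\top$ and $\bot$. Consequently, every propositional inference in $\ml$ on the state side translates rule-for-rule into the matching inference in $\meta$ applied to the interpreted formulas.

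For the base cases, the assumption rule $\Gamma \hlpr \alpha$ with $\alpha^u \in \Gamma$ becomes the corresponding assumption step in $\meta$ (since $[\alpha](\pi)$ sits in $[\Gamma](\pi)$), and $\Gamma \hlpr \top$ maps to the derivable $[\Gamma](\pi) \vdash \top$. For each inductive case corresponding to a propositional rule for $\wedge$, $\vee$, $\implies$, or $\bot$, I would simply invoke the inductive hypothesis on the premise(s) and then apply the matching rule of $\meta$; no substitution of terms or renaming of variables is needed, since the translation is uniform in $\pi$ and leaves all term-level data untouched.

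The one case warranting comment is the law of excluded middle $\Gamma \hlpr \alpha \vee \neg \alpha$, which must translate to $[\Gamma](\pi) \vdash [\alpha](\pi) \vee \neg [\alpha](\pi)$. This is immediate provided $\meta$ is taken to be classical, which fits the paper's stance that $\meta$ is a verifying theory in which strong axioms may be used freely; alternatively, one observes that state formulas are built from state predicate symbols using propositional connectives only, so decidability in $\meta$ would propagate from the interpreted predicates if one wished to avoid full classical logic.

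The extension to proofs in $\ml + \axh$ is handled by the hypothesis of the lemma itself: whenever the derivation invokes an axiom $\Gamma \hlpr \alpha$ from $\axh$, we splice in the assumed derivation of $[\alpha](\pi)$ from $[\Gamma](\pi)$ in $\meta + \axmeta$. I do not anticipate any genuine obstacle; because the embedding is homomorphic on propositional structure, the substantive content of the lemma is just to record this fact carefully, with the only minor subtlety being the reliance on classical reasoning in $\meta$ to interpret LEM on state formulas.
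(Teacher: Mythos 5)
Your proof is correct and follows exactly the route the paper intends: the paper dismisses this lemma as ``easily proven using induction over propositional derivations,'' and your case analysis (homomorphic commutation of $[\cdot]$ with the connectives, the assumption and $\top$ base cases, LEM justified by the classicality of $\meta$, and splicing in the assumed interpretations of $\axh$ axioms) is precisely that induction carried out carefully.
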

We are now in a position to make the semantic meaning of main formulas of $\ml$ precise. Note that, technically speaking, this is not necessary in what follows, neither to formulate our realizability interpretation nor to prove our soundness theorem. This is because our main realizability relation (Definition \ref{def:realizability:relation}) is of the form $(\sr{x}{A})$ for main formulas $A$ of $\ml$: This relation is basically equivalent to $(x\, \mathrm{mr}\, [A])$ for $[A]$ as defined below and a standard modified realizability relation mr, but our realizability interpretation essentially acts as a simultaneous inductive definition of both standard realizability and the embedding $[\cdot]$, and so neither of the latter need to be separately defined.

\begin{defi}
\label{asd}
\emph{main} formula $A$ of $\sl$, we define a formula $[A]$ of $\meta$, whose free variables are the same as those of $A$ (but now typed with type $D$), as follows:\medskip
\begin{itemize}

	\item $[\top]:=\top$ and $[\bot]:=\bot$,\smallskip
	
	\item $[P(t_1,\ldots,t_n)]:=P(t_1,\ldots,t_n)$,\smallskip
	
	\item $[A\wedge B]:=[A]\wedge [B]$, $[A\vee B]:=[A]\vee [B]$ and $[\exists x\, A]:=\exists x^D\, [A]$,\smallskip
	
	\item $[A\implies\hr{\alpha}{B}{\beta}]:=[A]\implies [\hr{\alpha}{B}{\beta}]$ and $[\forall x\, \hr{\alpha}{A}{\beta}]:=\forall x^D\, [\hr{\alpha}{A}{\beta}]$\medskip

\end{itemize}
where $[\hr{\alpha}{A}{\beta}]:=\exists \pi^S\, [\alpha](\pi)\implies [A]\wedge \exists \pi'\, [\beta](\pi')$. 
\end{defi}

Similarly to Lemma \ref{lem:state:emb:sound}, we can now prove the following by induction over derivations in $\ml$. We omit the proof, because it is straightforward and in any case not necessary in what follows.
\begin{prop}
\label{prop:state:emb:sound}
If $\Gamma\mlpr\hr{\alpha}{A}{\beta}$ in $\ml$ then $[\hr{\alpha}{A}{\beta}]$ is provable in $\meta$ from the assumptions $[\Gamma]$, where $[\Gamma]:=[A_1],\ldots,[A_n]$ for $\Gamma:=A_1,\ldots,A_n$. This extends to proofs in $\ml+\axh+\axm$ provided that the embedding of any axiom in $\axh$ and $\axm$ is provable in $\meta+\axmeta$.
\end{prop}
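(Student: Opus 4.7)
The plan is to proceed by induction on the derivation of $\Gamma\mlpr\hr{\alpha}{A}{\beta}$ in $\ml$, showing in each case that $[\hr{\alpha}{A}{\beta}]$ follows in $\meta$ from $[\Gamma]$ together with the interpretations of the immediate premises. I would reason directly against the unfolded definition $[\hr{\alpha}{A}{\beta}] \equiv \exists \pi\, [\alpha](\pi) \implies [A] \wedge \exists \pi'\, [\beta](\pi')$, so that the proof work consists almost entirely in producing suitable existential witnesses for intermediate states.

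The base cases (the assumption axiom and the $\top$-axiom) are immediate: since $\alpha=\beta$ there, given any $\pi$ with $[\alpha](\pi)$ we simply take $\pi':=\pi$ and extract $[A]$ either from the corresponding member of $[\Gamma]$ or, in the second case, from $[\top]=\top$. The propositional and quantifier rules are then handled by the natural state-threading pattern already foreshadowed in Section \ref{sec:formal:intuition}. Under $\wedge_S I$, for example, a state $\pi$ witnessing $[\alpha](\pi)$ feeds into the first premise to produce $[A]$ and an intermediate $\pi_1$ with $[\beta](\pi_1)$; feeding $\pi_1$ into the second premise then produces $[B]$ and a further $\pi_2$ with $[\gamma](\pi_2)$, which together deliver $[\hr{\alpha}{A \wedge B}{\gamma}]$. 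The rules $\wedge_S E_{L/R}$, $\vee_S I/E$, $\implies_S I/E$, the stateful quantifier rules, $\bot_S E$ and $\exists_S E$ all follow analogous state-threading patterns, with the usual side conditions on free variables ensuring that the corresponding quantifier manipulations go through in $\meta$ (where all object-level variables of $\ml$ are reinterpreted at type $D$).

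The Hoare rules $cons$ and $cond$ are the only places where the argument leaves pure predicate logic on the meta side, and there Lemma \ref{lem:state:emb:sound} provides exactly what is needed. For $cons$, the state judgements $\alpha \hlpr \beta$ and $\gamma \hlpr \delta$ lift to the implications $[\alpha](\pi) \implies [\beta](\pi)$ and $[\gamma](\pi') \implies [\delta](\pi')$, which suffice to weaken the pre- and postcondition around the middle premise. For $cond$, the premise $\hlpr \alpha \vee \beta$ lifts to $[\alpha](\pi) \vee [\beta](\pi)$ in $\meta$, so given $\pi$ witnessing $[\gamma](\pi)$ we case-split and apply the appropriate premise to $[\alpha\wedge \gamma](\pi)$ or $[\beta\wedge\gamma](\pi)$. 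The extension to $\ml+\axh+\axm$ is then automatic, since any invocation of a state axiom or main axiom in the derivation is discharged by the assumed embedding in $\meta+\axmeta$.

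The main obstacle is not conceptual but bookkeeping. The rules $\implies_S I/E$ and $\forall_S E$ nest Hoare triples inside main formulas, so the induction must carefully track which intermediate state corresponds to which level of nesting, especially when $\implies_S E$ is chained with $\wedge_S I$ to sequentially compose substatements. Once a consistent naming convention for these states is fixed, every case reduces to a short and essentially mechanical derivation in the metatheory.
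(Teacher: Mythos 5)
Your proposal is correct and follows exactly the route the paper indicates: the paper omits the proof of Proposition \ref{prop:state:emb:sound}, stating only that it proceeds by induction over derivations in $\ml$ in the manner of Lemma \ref{lem:state:emb:sound}, and your case-by-case state-threading argument (including the use of Lemma \ref{lem:state:emb:sound} for the $cons$ and $cond$ rules) is precisely the argument foreshadowed in the semantic discussion of Section \ref{sec:formal:intuition}. No gaps; nothing further needed.
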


\subsection{The embedding \texorpdfstring{$[\cdot]$}{[⋅]} on terms of $\mt$}
\label{sec:meta:terms}

Our translation on terms is a call-by-value monadic translation using the state monad $S\to X\times S$, which, intuitively speaking, gives a denotational interpretation of a standard call-by-value operational semantics of terms of $\mt$. A full treatment of the corresponding operational semantics and its adequacy with respect to the and denotational semantics will be left to future work, as we only require the latter for the realizability interpretation. However, to help motivate the definitions that follow, and also justify our claim that $[t]$ can be viewed as an imperative program, we give an informal intuition in Remark \ref{rem:operational} below. 

We first define a translation on types of $\mt$ as follows:\medskip
\begin{itemize}

	\item $[D]:=D$, $[\comm]:=1$ and $[X\times Y]:=[X]\times [Y]$,\smallskip
	
	\item $[X+Y]:=\bool\times [X]\times [Y]$\smallskip
	
	\item $[X\to Y]:=[X]\to S\to [Y]\times S$\medskip

\end{itemize}
\begin{lem}
\label{lem:zero}
For any type $X$ of $\ml$, the type $[X]$ is inhabited, in the sense that we can define a canonical closed term $0_X:[X]$.
\end{lem}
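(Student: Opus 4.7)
The plan is to proceed by straightforward structural induction on the type $X$, exhibiting a canonical closed term $0_X : [X]$ in $\meta$ at each case. Since the translation $[\cdot]$ on types is defined by a direct recursion on the grammar of $\mt$, and the metatheory $\meta$ contains all the type formers needed on the right-hand side, the induction should go through without complications.

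For the base cases, I would handle $X \equiv D$ by appealing to the assumption fixed at the start of Section \ref{sec:formal} that $\ml$ contains at least one constant $c$, which is available as a closed term of type $D$ in $\meta$ (by Definition \ref{def:interpret:terms}); thus set $0_D := c$. For $X \equiv \comm$, we have $[\comm] = 1$, which is inhabited by the unit element $() : 1$ built into $\meta$, so set $0_\comm := ()$.

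For the inductive cases, I would take $0_{X \times Y} := \pair{0_X, 0_Y}$ using the pairing operator of $\meta$; for sums, using $[X + Y] = \bool \times [X] \times [Y]$, set $0_{X + Y} := \pair{\false, 0_X, 0_Y}$, choosing either boolean arbitrarily; and for function types, using $[X \to Y] = [X] \to S \to [Y] \times S$, set
\begin{equation*}
0_{X \to Y} := \lambda x^{[X]}. \lambda \pi^S. \pair{0_Y, \pi}.
\end{equation*}
Here the state argument $\pi$ is threaded unchanged through the pair, so we avoid needing to produce a canonical state inhabitant; and $0_Y$ is available by the induction hypothesis.

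There is essentially no hard step: the only point that could have been an obstacle is the need to inhabit $S$, but the function-type clause of the translation always supplies $\pi$ as a bound argument, so the inhabitant of $[X \to Y]$ can simply reuse it. The definition is uniform in $X$ and each clause is manifestly a closed well-typed term of $\meta$, completing the induction.
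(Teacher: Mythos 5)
Your proof is correct and follows the same route as the paper: structural induction on types, taking $0_D:=c$ for the assumed constant of $\ml$, and handling the one nonstandard case $[X\to Y]=[X]\to S\to [Y]\times S$ by $\lambda x,\pi\,.\,\pair{0_Y,\pi}$, which threads the bound state through and thereby avoids having to inhabit $S$. The remaining cases you spell out ($\comm$, products, sums) are the routine ones the paper leaves implicit.
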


\begin{proof}
Induction on types, letting $0_D:=c$ for a constant symbol which is assumed to exist in $\ml$. The only other nonstandard case is $0_{X\to Y}$, which can be defined as $\lambda x,\pi\, . \, \pair{0_Y,\pi}$.
\end{proof}
Finally, before introducing our translation on terms, we need to add characteristic functions to $\meta$ for all state formulas (analogous to the characteristic functions for quantifier-free formulas in \cite{gerhardy-kohlenbach:05:herbrand}). For any state formula $\alpha[x_1,\ldots,x_n]$ of $\ml$, where $x_1,\ldots,x_n$ are the free variables of $\alpha$, we introduce constants $\statecase{\alpha}{}{}{}:D^n\to S\to X\to X\to X$ satisfying the axioms
\begin{equation*}
\begin{aligned}
[\alpha][x_1,\ldots,x_n](\pi)&\implies \statecase{\alpha}{\pair{x_1,\ldots,x_n}}{\pi}{y}{z}=y \\
[\neg\alpha][x_1,\ldots,x_n](\pi)&\implies \statecase{\alpha}{\pair{x_1,\ldots,x_n}}{\pi}{y}{z}=z 
\end{aligned}
\end{equation*}
\begin{defi}
\label{def:emb:terms}
For each term $\Gamma\vdash t:X$ of $\mt$ we define a term $[\Gamma]\vdash [t]:S\to [X]\times S$ of $\meta$ as follows, where $[\cdot]$ is defined on contexts as $[x_1:X_1,\ldots,x_n:X_n]:=x_1:[X_1],\ldots,x_n:[X_n]$:\medskip
\begin{itemize}

	\item $[x]\pi:=\pair{x,\pi}$,\smallskip
	
	\item $[\unit]\pi:=\pair{(),\pi}$,\smallskip
	
	\item $[f]\pi:=\pair{\lambda x^{D^n},\pi\, . \, \pair{fx,\pi},\pi}$,\smallskip
	
	\item $[c]\pi$ is appropriately defined for each additional constant in $\axmt$,\smallskip
	
	\item $[s\circ t]\pi:=\pair{a,b,\pi_2}$ where $\pair{a,\pi_1}:=[s]\pi$ and $\pair{b,\pi_2}:=[t]\pi_1$,\smallskip
	
	\item $[\prl t]\pi:=\pair{a,\pi_1}$ and $[\prr t]\pi:=\pair{b,\pi_1}$ where $\pair{a,b,\pi_1}:=[t]\pi$,\smallskip
	
	\item $[\inl t]\pi:=\pair{\false,a,0_Y,\pi_1}$ and $[\inr t]:=\pair{\true,0_X,b,\pi_1}$ for $\pair{a,\pi_1}:=[t]\pi$,\smallskip
	
	\item $[\elim{r}{s}{t}]\pi:=\case{e}{(f a \pi_2)}{(g b \pi_3)}$ for $\pair{e,a,b,\pi_1}:=[r]\pi$, $\pair{f,\pi_2}:=[s]\pi_1$, $\pair{g,\pi_3}:=[t]\pi_1$,\smallskip
	
	\item $[\lambda x.t]\pi:=\pair{\lambda x^{[X]}.[t],\pi}$,\smallskip
	
	\item $[ts]\pi:=fa\pi_2$ for $\pair{f,\pi_1}:=[t]\pi$ and $\pair{a,\pi_2}:=[s]\pi_1$,\smallskip
	
	\item $[\zero_X]\pi:=\pair{0_X,\pi}$,\smallskip
	
	\item $[\ite{\alpha[x_1,\ldots,x_n]}{s}{t}]\pi:=\statecase{\alpha}{\pair{x_1,\ldots,x_n}}{\pi}{([s]\pi)}{([t]\pi)}$ where $\{x_1,\ldots,x_n\}$ are the free variables of $\alpha$.\medskip

\end{itemize}
\end{defi}
\begin{rem}
\label{rem:operational}
The intuition behind the embedding $[t]$ is to give a denotational semantics for $t$ as a stateful program. Informally, a term $t:X$ of $\mt$ would have operational behaviour $\pair{t,\pi}\Downarrow \pair{v,\pi_1}$, where we imagine that $t$ in state $\pi$ evaluates to a value $v$ and returns a state $\pi_1$. We view $[X]$ as the set of denotations of values of type $X$, and $[t]:S\to [X]\times S$ accordingly as a function with $[t]\pi=\pair{[v],\pi_1}$. The interpretation of fully typed terms $[\Gamma\vdash t:X]$ correspond to mappings $[\Gamma]\to S\to [X]\times S$ where free variables of $t$ can be instantiated by values of the corresponding type.

With this intuition in mind, each component of Definition \ref{def:emb:terms} corresponds to a natural \emph{operational} interpretation of the corresponding term forming rule. For example, the intended operational semantics of $s\circ t$ would be
\begin{equation*}
\frac{\pair{s,\pi}\Downarrow \pair{u,\pi_1} \ \ \ \pair{t,\pi_1}\Downarrow \pair{v,\pi_2}}{\pair{s\comp t,\pi}\Downarrow \pair{\pair{u,v},\pi_2}}
\end{equation*}
and this behaviour is embodied by the denotation $[s\comp t]$, which maps $\pi$ to $\pair{a,b,\pi_2}$ where $a$ can be interpreted as the denotation of the value $u$, and $b$ as the denotation of $v$. Similarly, the call-by-value operational semantics of function application would be expressed by the rule
\begin{equation*}
\frac{\pair{t,\pi}\Downarrow \pair{u,\pi_1} \ \ \ \pair{s,\pi_1}\Downarrow \pair{v,\pi_2} \ \ \ \pair{uv,\pi_2}\Downarrow \pair{w,\pi_3}}{\pair{ts,\pi}\Downarrow \pair{w,\pi_3}}
\end{equation*}
i.e. we evaluate first the function $t$, then the argument $s$, then the function application itself. This order of evaluation along with the behaviour of the state is represented semantically by $[ts]$. Finally, we note that the main interactions with the state for terms of $\mt+\axmt$ are driven by the constants $c\in\axmt$, whose semantic interpretation as stateful programs will need to be specified in each case.

Our monadic semantics and its relationship with the intended call-by-value operational semantics is very similar in spirit to the monadic denotational semantics used in \cite{danner-licata-ramyaa:15:cost} and related papers (but with the state monad instead of the complexity monad). We leave a formal definition and detailed exploration of the operational semantics of extracted imperative programs to future work, where it is anticipated that existing work on the operational semantics of imperative languages could be useful and revealing (e.g. \cite{churchill-laird-mccusker:13:imperative,mccusker:10:imperative,reddy:96:state}), particularly in extending our realizability interpretation to incorporate richer imperative languages.
\end{rem}
The following lemmas will be useful when verifying our realizability interpretation in the next section. The first is by a simple induction on terms.
\begin{lem}
\label{lem:emb:terms}
For any term $t$ of $\ml$, we have $[t]\pi=\pair{t,\pi}$ (cf. Definitions \ref{def:interpret:terms} and \ref{def:emb:terms}).
\end{lem}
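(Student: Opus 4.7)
The plan is to proceed by structural induction on the term $t$ of $\ml$. The underlying observation is that $\ml$-terms are purely functional: they are built only from variables and function symbols, with no imperative constructs, so the state $\pi$ must thread through $[t]\pi$ unchanged and the returned value must be the direct interpretation of $t$ in $\meta$. The base case $t \equiv x$ is immediate from the variable clause of Definition \ref{def:emb:terms}, which gives $[x]\pi = \pair{x, \pi}$, and this matches the trivial interpretation of the variable $x$ in $\meta$.

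For the inductive step $t \equiv f(t_1, \ldots, t_n)$, the interpretation of $t$ in $\mt$ is $f(t_1 \circ \cdots \circ t_n)$ by Definition \ref{def:interpret:terms}. I would first establish, by a side induction on $n$, the auxiliary claim that $[t_1 \circ \cdots \circ t_n]\pi = \pair{\pair{t_1, \ldots, t_n}, \pi}$: unfolding the clause $[s \circ u]\pi = \pair{a, b, \pi_2}$ of Definition \ref{def:emb:terms} and applying the outer inductive hypothesis $[t_i]\pi_{i-1} = \pair{t_i, \pi_{i-1}}$ at each step shows that the threaded state is returned unmodified at each $\circ$, so $\pi_n = \pi$ and the collected values are exactly $t_1, \ldots, t_n$. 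Then I would unfold $[f(t_1 \circ \cdots \circ t_n)]\pi$ using the function-symbol and application clauses: $[f]\pi$ produces the functional denotation $\lambda x, \pi'. \pair{fx, \pi'}$ paired with state $\pi$, and applying this to the tuple $\pair{t_1, \ldots, t_n}$ together with state $\pi$ produced by the auxiliary claim yields $\pair{f\pair{t_1, \ldots, t_n}, \pi}$, which by Definition \ref{def:interpret:terms} is exactly $\pair{t, \pi}$ under the interpretation of $t$ as a term of $\meta$.

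There is no real obstacle here — the lemma is essentially a sanity check confirming that pure first-order terms incur no state side-effects under the monadic translation. All of the work reduces to careful bookkeeping with the value/state threading built into Definition \ref{def:emb:terms}, and the only mildly delicate point is keeping the two levels of induction (main induction on $t$, subsidiary induction on the length of the $\circ$-chain) straight.
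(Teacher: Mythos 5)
Your proof is correct and is exactly the ``simple induction on terms'' that the paper invokes without spelling out: the variable clause gives the base case, and the function-symbol case reduces to unwinding the clauses for $[f]$, application, and $\circ$, with the state threading through unchanged. The subsidiary induction on the $\circ$-chain is the right way to organise the bookkeeping, so there is nothing to add.
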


\begin{lem}[Currying in $\mt$]
\label{lem:currying}
Suppose that $\Gamma,x:X,y:Y\vdash t:Z$ is a term in $\mt$, and define $\Gamma\vdash\lambda^\ast v.t:X\times Y\to Z$ by $\lambda^\ast v.t:=\lambda v.(\lambda x,y.t)(\prl v)(\prr v)$ where $v$ is not free in $t$. Then for any $s:X\times Y$ we have
\begin{equation*}
[(\lambda^\ast v.t)s]\pi=[t][a/x,b/y]\pi_1
\end{equation*}
where $\pair{a,b,\pi_1}:=[s]\pi$.
\end{lem}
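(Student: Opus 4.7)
The plan is to prove this by a direct unfolding of Definition \ref{def:emb:terms}, tracking how the state is threaded through the various clauses and observing that every intermediate step on the left-hand side (a lambda abstraction, a projection of a pair, or a bound variable lookup) leaves the state unchanged, so that the only state transition comes from evaluating $[s]$.

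First I would expand $\lambda^\ast v.t$ to $\lambda v.(\lambda x.\lambda y.t)(\prl v)(\prr v)$ and compute $[(\lambda v.u)\,s]\pi$ using the application clause $[ts']\pi = fa'\pi''$ with $\pair{f,\pi'} := [\lambda v.u]\pi$ and $\pair{a',\pi''} := [s]\pi'$. The lambda clause gives $\pair{f,\pi'} = \pair{\lambda v^{[X\times Y]}.[u],\pi}$, so $\pi' = \pi$. By the assumption of the lemma, $[s]\pi = \pair{a,b,\pi_1}$ (i.e.\ $\pair{\pair{a,b},\pi_1}$ once we view the value of type $[X\times Y]=[X]\times[Y]$ as a pair), so $a' = \pair{a,b}$ and $\pi'' = \pi_1$. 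Hence the whole expression reduces (using the $\beta$-axiom of $\meta$) to $[u][\pair{a,b}/v]\,\pi_1$, where the substitution is now at the meta-level.

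Next I would compute $[u]\pi_1 = [(\lambda x.\lambda y.t)(\prl v)(\prr v)]\pi_1$ by applying the application clause twice more. Working outwards to inwards, the inner application yields (using the $\lambda$ and $\prl$ clauses, together with $[v]\pi = \pair{v,\pi}$ unpacked as $\pair{\proj_0(v),\proj_1(v),\pi}$) the intermediate term $[\lambda y.t][\proj_0(v)/x]\,\pi_1 = \pair{\lambda y^{[Y]}.[t][\proj_0(v)/x],\,\pi_1}$, after which the outer application and the $\prr$ clause combine to produce $[t][\proj_0(v)/x][\proj_1(v)/y]\,\pi_1$. Throughout, the state remains $\pi_1$ because none of $\lambda$, $\prl$, $\prr$, or variable lookup modify it.

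Finally I would substitute $\pair{a,b}$ for $v$. Since the provisos of the lemma guarantee that $v$ is not free in $t$ (nor in $\pi_1$, which is a fresh meta-variable from the evaluation of $[s]$), the substitution only affects the two $\proj_i(v)$ terms, giving $\proj_0(\pair{a,b}) = a$ and $\proj_1(\pair{a,b}) = b$ by the pairing axioms of $\meta$. This yields $[t][a/x,b/y]\,\pi_1$ as required. The only real subtlety is bookkeeping: carefully distinguishing object-level pairs and projections of $\mt$ from the meta-level pairs and projections of $\meta$ when interpreting $[v]\pi$ at type $[X\times Y]$, and making sure no free variable clashes occur during substitution — both routine once set up cleanly.
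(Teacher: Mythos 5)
Your proposal is correct and follows essentially the same route as the paper's proof: a direct unwinding of Definition \ref{def:emb:terms}, computing the interpretation of the two inner applications to obtain $[t][\proj_0 v/x,\proj_1 v/y]$ with the state unchanged, and then substituting $\pair{a,b}$ for $v$ after evaluating $[s]\pi=\pair{a,b,\pi_1}$. The only difference is the (cosmetic) order of evaluation — you work outside-in where the paper works inside-out — and your explicit remarks on state-preservation and variable bookkeeping are accurate.
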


\begin{proof}
By unwinding the definition of $[\cdot]$. For any variable $v:X\times Y$ we have $[\prl v]\pi=\pair{\proj_0v,\pi}$ and $[\prr v]\pi=\pair{\proj_1v,\pi}$, and we also have $[\lambda x,y\,.\,t]\pi=\pair{\lambda x,\pi.\pair{\lambda y.[t],\pi},\pi}$. We therefore calculate
\begin{equation*}
[(\lambda x,y.t)(\prl v)]\pi=(\lambda x,\pi.\pair{\lambda y.[t],\pi})(\proj_0 v)\pi=\pair{\lambda y.[t][\proj_0 v/x],\pi}
\end{equation*}
and thus
\begin{equation*}
[(\lambda x,y.t)(\prl v)(\prr v)]\pi=(\lambda y.[t][\proj_0v/x])(\proj_1v)\pi=[t][\proj_0v/x,\proj_1v/y]\pi
\end{equation*}
Finally, we can see that if $\pair{a,b,\pi_1}:=[s]\pi$ then
\begin{equation*}
\begin{aligned}
[(\lambda^\ast v.t)(s)]\pi&=(\lambda v.[(\lambda x,y.t)(\prl v)(\prr v)])(\pair{a,b})\pi_1\\
&=(\lambda v.[t][\proj_0v/x,\proj_1v/y])(\pair{a,b})\pi_1\\
&=[t][\proj_0v/x,\proj_1v/y][\pair{a,b}/v]\pi_1\\
&=[t][a/x,b/y]\pi_1
\end{aligned}
\end{equation*}
which completes the proof.
\end{proof}

\section{A realizability interpretation of $\ml$ into $\mt$}
\label{sec:realizability}

We now come to the main contribution of the paper, which is the definition of a realizability relation between terms of $\mt$ and formulas of $\ml$, along with a soundness theorem that shows us how to extract realizers from proofs. Our metatheory $\meta$ is used to define the realizability relation and prove the soundness theorem. 
\begin{defi}[Types of realizers]
\label{def:realizability:types}
To each main formula $A$ of $\ml$ we assign a type $\srtype{A}$ of $\mt$ as follows:\medskip
\begin{itemize}

	\item $\srtype{\top}=\srtype{\bot}=\srtype{P(t_1,\ldots,t_n)}:=C$,\smallskip
	
	\item $\srtype{A\wedge B}:=\srtype{A}\times\srtype{B}$,\smallskip
	
	\item $\srtype{A\vee B}:=\srtype{A}+\srtype{B}$,\smallskip
	
	\item $\srtype{\exists x\, A}:=D\times \srtype{A}$,\smallskip
	
	\item $\srtype{A\implies \hr{\alpha}{B}{\beta}}:=\srtype{A}\to\srtype{B}$,\smallskip
	
	\item $\srtype{\forall x\, \hr{\alpha}{A}{\beta}}:=D\to \srtype{A}$.\medskip

\end{itemize}
\end{defi}

\begin{defi}[Realizability relation]
\label{def:realizability:relation}
For each main formula $A$ of $\ml$ we define a formula $\sr{x}{A}$ of $\meta$, whose free variables are contained in those of $A$ (now typed with type $D$) together with a fresh variable $x:[\srtype{A}]$, by induction on the structure of $A$ as follows:\medskip
\begin{itemize}

	\item $\sr{x}{Q}:=Q$ for $Q=\top,\bot$ or $P(t_1,\ldots,t_n)$,\smallskip
	
	\item $\sr{x}{A\wedge B}:=(\sr{\proj_0x}{A})\wedge (\sr{\proj_1x}{B})$,\smallskip
	
	\item $\sr{x}{A\vee B}:=(\proj_0x=\false\implies \sr{\proj_0(\proj_1x)}{A})\wedge (\proj_0x=\true\implies \sr{\proj_1(\proj_1x)}{B})$,\smallskip
	
	\item $\sr{x}{\exists y\, A(y)}:=(\sr{\proj_1x}{A})[\proj_0x/y]$,\smallskip
	
	\item $\sr{f}{(A\implies\hr{\alpha}{B}{\beta})}:=\forall x^{[\srtype{A}]}\, (\sr{x}{A}\implies \sr{fx}{\hr{\alpha}{B}{\beta}})$,\smallskip
	
	\item $\sr{f}{(\forall x\, \hr{\alpha(x)}{A(x)}{\beta(x)})}:=\forall x^D\, (\sr{fx}{\hr{\alpha(x)}{A(x)}{\beta(x)}})$,

\end{itemize}\medskip
where for $x:S\to [\srtype{A}]\times S$ we define
\begin{itemize}

	\item $\sr{x}{\hr{\alpha}{A}{\beta}}:=\forall \pi^S\, ([\alpha](\pi)\implies \sr{\proj_0(x\pi)}{A}\wedge [\beta](\proj_1(x\pi)))$.

\end{itemize}
\end{defi}
The following substitution lemma is easily proven by induction on formulas of $\ml$.
\begin{lem}
\label{lem:realizability:subs}
For any term $t$ of $\ml$ and $s:[\srtype{A}]$ we have $\sr{s}{A[t/x]}=(\sr{s}{A})[t/x]$, where $x$ is not free in $s$ and on the right hand side we implicitly mean the natural interpretation of $t$ in $\meta$ (cf. Definition \ref{def:interpret:terms}).
\end{lem}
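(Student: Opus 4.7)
The plan is to prove this by a straightforward structural induction on the main formula $A$ of $\ml$. Before starting the induction, I would first establish a preliminary fact: for any state formula $\alpha$ of $\ml$, the substitution commutes with the semantic embedding, i.e.\ $[\alpha[t/x]](\pi) \equiv [\alpha](\pi)[t/x]$. This follows by an easy induction on state formulas using Definition \ref{def:emb:formulas}, since $[\cdot]$ just threads $\pi$ through propositional connectives and rewrites $p(t_1,\ldots,t_n)$ to $p(t_1,\ldots,t_n,\pi)$, and substitution distributes over function symbols by Definition \ref{def:interpret:terms}.

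With this in hand, the main induction is essentially routine. For the atomic cases $A \equiv \top$, $A \equiv \bot$ and $A \equiv P(t_1,\ldots,t_n)$, both sides reduce to the atomic formula itself (with the appropriate substitution applied to the term arguments), so equality holds. For the connective cases $A \wedge B$, $A \vee B$ and $\exists y\, A$, we apply the induction hypothesis to the immediate subformulas; each clause of Definition \ref{def:realizability:relation} is built from projections, equalities and boolean tests on the fresh realizer variable $s$ which does not mention $x$, so substitution by $t$ passes through unchanged and leaves $\proj_0 s$, $\proj_1 s$ etc.\ untouched.

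For the Hoare triple case $\hr{\alpha}{B}{\beta}$ (which arises inside the $\implies$ and $\forall$ clauses), I would unfold the definition $\sr{x}{\hr{\alpha}{B}{\beta}} := \forall \pi^S\, ([\alpha](\pi) \implies \sr{\proj_0(x\pi)}{B} \wedge [\beta](\proj_1(x\pi)))$, and then commute substitution past $\forall \pi$ (using the standard convention that $\pi$ is distinct from $x$ and not free in $t$), invoke the preliminary fact on $\alpha$ and $\beta$, and apply the induction hypothesis to $B$. The $\implies$ and $\forall$ cases themselves then follow by pushing substitution past $\forall x^{[\srtype{A}]}$ and $\forall x^D$ respectively, again using the variable convention that the bound quantifier variable is chosen distinct from $x$ and does not occur free in $t$.

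The only real subtlety — and the place I would be most careful — is the bound variable discipline in the $\exists$, $\forall$ and $\hr{-}{-}{-}$ cases, where one must ensure that the quantifier variable is fresh for $t$ and that substitution $[t/x]$ is capture-avoiding; apart from this, no clause requires more than unfolding definitions and invoking the induction hypothesis, which is why this lemma is rightly described as easy.
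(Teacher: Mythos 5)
Your proposal is correct and is exactly the argument the paper has in mind: the paper simply asserts that the lemma "is easily proven by induction on formulas of $\ml$" and omits the details, which you have filled in faithfully, including the auxiliary commutation fact for state formulas and the capture-avoidance caveats in the quantifier and Hoare-triple cases. Nothing further is needed.
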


\begin{thm}[Soundness]
\label{thm:main}
Suppose that 
\begin{equation*}
\Gamma:=A_1^{u_1},\ldots,A_n^{u_n}\mlpr\hr{\alpha}{A}{\beta}
\end{equation*}
is provable in $\ml$. Then we can extract from the proof a term $\Delta,\srtype{\Gamma}\vdash t:\srtype{A}$ of $\mt$, where $\Delta$ contains the free variables of $\Gamma$ and $\hr{\alpha}{A}{\beta}$ (typed with type $D$) and $\srtype{\Gamma}:=x_1:\srtype{A_1},\ldots,x_n:\srtype{A_n}$ for fresh variables $x_1,\ldots,x_n$, such that the formula
\begin{equation*}
\sr{[t]}{\hr{\alpha}{A}{\beta}}
\end{equation*}
is provable in $\meta$ from the assumptions $(\sr{x_1}{A_1})^{u_1},\ldots,(\sr{x_n}{A_n})^{u_n}$ for $x_i:[\srtype{A_i}]$. The theorem holds more generally for proofs in $\ml+\axh+\axm$, now provably in $\meta+\axmeta$, if:\medskip
\begin{itemize}

	\item for any axiom $\Gamma\hlpr\alpha$ in $\axh$, the corresponding axiom $[\Gamma](\pi)\implies[\alpha](\pi)$ is added to $\axmeta$, \smallskip

	\item for any axiom in $\axm$ there is a term $t$ of $\mt+\axmt$ such that $[t]$ realizes that axiom provably in $\meta+\axmeta$. 
	
\end{itemize}		
\end{thm}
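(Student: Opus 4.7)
The proof goes by induction on the derivation of $\Gamma\mlpr\hr{\alpha}{A}{\beta}$ in $\ml+\axh+\axm$. For each axiom and inference rule of Figure 2, I exhibit a term constructor in $\mt+\axmt$ whose denotation, when unfolded via Definitions \ref{def:realizability:relation} and \ref{def:emb:terms}, satisfies the realizability condition. The base cases are almost immediate: the assumption axiom $A^u\in\Gamma$ is realized by the variable $x_u$, whose denotation $[x_u]\pi=\pair{x_u,\pi}$ preserves the incoming state and returns the supplied realizer, matching $\hr{\alpha}{A}{\alpha}$; the $\top$-axiom is realized by $\unit$; realizers for axioms in $\axm$ are supplied directly by the hypotheses of the theorem; and state-level sequents $\alpha\hlpr\beta$ arising from $\axh$ are transported into $\meta+\axmeta$ by Lemma \ref{lem:state:emb:sound}.

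For the structural inductive cases, each rule dictates a term constructor whose monadic denotation threads the state in precisely the intended order. The key sequential rule $\wedge_SI$ is realized by $s\comp t$, for which $[s\comp t]\pi=\pair{a,b,\pi_2}$ with $\pair{a,\pi_1}:=[s]\pi$ and $\pair{b,\pi_2}:=[t]\pi_1$, mirroring the nested form of the realizability relation when $\alpha$ holds at $\pi$. The remaining rules use the obvious term constructors: projections $\prl,\prr$ for $\wedge_SE$; injections $\inl,\inr$ for $\vee_SI$; $\lambda$-abstraction for $\forall_SI$ and $\implies_SI$, with term application for $\forall_SE$; the pairing $t\comp s$ placing the witness $t$ before the realizer $s$ for $\exists_SI$; and $\zero_{\srtype{A}}$ for $\bot_SE$, which is inhabited by Lemma \ref{lem:zero}. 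The Hoare rule $cons$ leaves the realizer unchanged and invokes Lemma \ref{lem:state:emb:sound} to transport the pre- and post-conditions along $\alpha\hlpr\beta$ and $\gamma\hlpr\delta$, while $cond$ is realized by $\ite{\alpha}{s}{t}$, whose denotation via $\statecase{\alpha}{}{}{}{}$ selects the correct branch according to whether $\alpha$ or $\neg\alpha$ holds at the incoming state, by virtue of the two characteristic-function axioms governing $\statecase{\alpha}{}{}{}{}$.

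The main obstacle will be the elimination rules with hypothetical premises, namely $\implies_SE$, $\vee_SE$, and $\exists_SE$. In each such case the first premise produces a value (a function, a tagged disjunct, or a witness-realizer pair) together with an intermediate state $\pi_1$, and one or more continuation premises expect to be run starting from $\pi_1$ with a fresh variable bound to that value. For $\implies_SE$ I take the realizer $(\lambda^\ast v. xy)(s\comp t)$ with $x,y$ fresh variables of the appropriate types; for $\exists_SE$ I take $(\lambda^\ast v. t')s$, destructuring the pair to bind both the witness and its realizer; and for $\vee_SE$ I use $\elim{r}{(\lambda u.s')}{(\lambda v.t')}$, dispatching on the tag component of $r$. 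In each instance the bookkeeping of state threading and of substituting witnesses into continuations is handled by Lemma \ref{lem:currying} together with the substitution Lemma \ref{lem:realizability:subs}, which together ensure that term-level substitution of the witness agrees semantically with substitution of its realizer in the realizability formula. Once these lemmas are applied, the verification in each case reduces to a routine unfolding of definitions.
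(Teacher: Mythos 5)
Your proposal is correct and follows essentially the same route as the paper: induction on derivations, with the identical term assignments for each rule ($s\comp t$ for $\wedge_SI$, $\elim{}{}{}$ for $\vee_SE$, $\lambda^\ast$ via Lemma \ref{lem:currying} for $\exists_SE$, $\ite{\alpha}{}{}$ for $cond$, and so on) and the same supporting lemmas. The only cosmetic divergence is at $\implies_SE$, where the paper simply uses the application $ts$ — whose call-by-value denotation already threads the state in the required order $\alpha\to\beta\to\gamma\to\delta$ — rather than your denotationally equivalent but more elaborate $(\lambda^\ast v.\,xy)(s\comp t)$.
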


\begin{proof}
Induction on the structure of derivations in $\ml$. In all cases, we assume as global assumptions $(\sr{x_1}{A_1})^{u_1},\ldots,(\sr{x_n}{A_n})^{u_n}$, and our aim is then to produce a term $t$ such that if $[\alpha](\pi)$ holds for some state variable $\pi$, then $\sr{a}{A}$ and $[\beta](\pi_1)$ hold for $\pair{a,\pi_1}:=[t]\pi$.\medskip
\begin{itemize}

	\item For the axiom $\Gamma\mlpr\hr{\alpha}{A}{\alpha}$, if $A^u\in\Gamma$ we define $t:=x$ for the corresponding variable $x:\srtype{A}$. Then $[x]\pi:=\pair{x,\pi}$ for $\sr{x}{A}$ and $[\alpha](\pi)$.  For $\Gamma\mlpr\hr{\alpha}{\top}{\alpha}$ we define $t:=\unit$ and the verification is even simpler.
	
	\medskip
	
	\item ($\wedge_SI$) Given terms $s,t$ with $\sr{[s]}{\hr{\alpha}{A}{\beta}}$ and $\sr{[t]}{\hr{\beta}{B}{\gamma}}$, from $[\alpha](\pi)$ we can infer $\sr{a}{A}$ and $[\beta](\pi_1)$ for $\pair{a,\pi_1}:=[s]\pi$, and from $[\beta](\pi_1)$ it follows that $\sr{b}{B}$ and $[\gamma](\pi_2)$ for $\pair{b,\pi_2}:=[t]\pi_1$, therefore we have shown that $\sr{[s\circ t]}{\hr{\alpha}{A\wedge B}{\gamma}}$.
	
	\medskip
	
	\item ($\wedge_SE_i$) If $\sr{[t]}{\hr{\alpha}{A\wedge B}{\beta}}$ then $\sr{\pair{a,b}}{A\wedge B}$ and $[\beta](\pi_1)$ follow from $[\alpha](\pi)$, where $\pair{a,b,\pi_1}:=[t]\pi$. But then $\sr{[\prl t]}{\hr{\alpha}{A}{\beta}}$ and $\sr{[\prr t]}{\hr{\alpha}{B}{\beta}}$.
	
	\medskip
	
	\item ($\vee_SI_i$) If $\sr{[t]}{\hr{\alpha}{A}{\beta}}$ and $[\alpha](\pi)$ holds, then $\sr{a}{A}$ and $[\beta](\pi_1)$ for $\pair{a,\pi_1}:=[t]\pi$, and therefore 
	\begin{equation*}
	(b=\false\implies \sr{a}{A})\wedge (b=\true\implies \sr{0_{\srtype{B}}}{B})
	\end{equation*}
	for $b:=\false$. Thus $\sr{[\inl t]}{A\vee B}$. By an entirely analogous argument we can show that $\sr{[\inr t]}{A\vee B}$ whenever $\sr{[t]}{B}$.
	
	\medskip
	
	\item ($\vee_SE$) Suppose that $r$, $s(x)$ and $t(y)$ are such that $\sr{[r]}{\hr{\alpha}{A\vee B}{\beta}}$, $\sr{[s](x)}{\hr{\beta}{C}{\gamma}}$ assuming $\sr{x}{A}$, and $\sr{[t](y)}{\hr{\beta}{C}{\gamma}}$ assuming $\sr{y}{B}$. We claim that 
	\begin{equation*}
	\sr{[\elim{r}{(\lambda x.s)}{(\lambda y.t)}]}{\hr{\alpha}{C}{\gamma}}
	\end{equation*}
	To prove this, first note that if $[\alpha](\pi)$, we have $\sr{\pair{e,a,b}}{A\vee B}$ and $[\beta](\pi_1)$ for $\pair{e,a,b,\pi_1}\!:=[r]\pi$. There are now two possibilities. If $e=\false$ then
	\begin{equation*}
	\begin{aligned}
	[\elim{r}{(\lambda x.s)}{(\lambda y.t)}]\pi&=fa\pi_2 \ \ \ \mbox{for $\pair{f,\pi_2}:=[\lambda x.s]\pi_1=\pair{\lambda x.[s](x),\pi_1}$}\\
	&=(\lambda x.[s](x))a\pi_1\\
	&=[s](a)\pi_1
	\end{aligned}
	\end{equation*}
	But since $[\beta](\pi_1)$ holds and $e=\false$ also implies that $\sr{a}{A}$, we have $\sr{c}{C}$ and $[\gamma](\pi_2)$ for $\pair{c,\pi_2}:=[s](a)\pi_1$, which proves the main claim in the case $e=\false$. An analogous argument works for the case $e=\true$.
	
	\medskip
	
	\item ($\implies_SI$) If $t(x)$ is such that $\sr{[t](x)}{\hr{\alpha}{B}{\beta}}$ whenever $\sr{x}{A}$, then by definition we have $$\sr{\lambda x.[t]}{A\implies\hr{\alpha}{B}{\beta}}$$ and therefore $\sr{[\lambda x.t]}{\hr{\gamma}{A\implies\hr{\alpha}{B}{\beta}}{\gamma}}$ for any $\gamma$.
	
	\medskip
	
	\item ($\implies_SE$) Assume that $\sr{[s]}{\hr{\beta}{A}{\gamma}}$ and $\sr{[t]}{\hr{\alpha}{A\implies\hr{\gamma}{B}{\delta}}{\beta}}$. If $[\alpha](\pi)$ holds then defining $\pair{f,\pi_1}:=[t]\pi$ we have $[\beta]\pi_1$ and
	\begin{equation*}
	\sr{x}{A}\implies \sr{fx}{\hr{\gamma}{B}{\delta}}
	\end{equation*}
	Similarly, defining $\pair{a,\pi_2}:=[s]\pi_1$, it follows that $[\gamma](\pi_2)$ and $\sr{a}{A}$. Finally, setting $\pair{b,\pi_3}:=fa\pi_2$ it follows that $\sr{b}{B}$ and $[\delta](\pi_3)$, and we have therefore proven that $\sr{[ts]}{\hr{\alpha}{B}{\delta}}$.
	
	\medskip
	
	\item ($\bot_SE$) If $\sr{[t]}{\hr{\alpha}{\bot}{\beta}}$ then from $[\alpha](\pi)$ we can infer $\sr{a}{\bot}$ and $[\beta](\pi_2)$ for $\pair{a,\pi_1}:=[t]\pi$. But $\sr{a}{\bot}=\bot$, and from $\bot$ we can deduce anything, and in particular $\sr{0_{\srtype{A}}}{A}$ and $[\gamma](\pi)$, from which it follows that $\sr{[\zero_{\srtype{A}}]}{\hr{\alpha}{A}{\gamma}}$.
	
	\medskip
	
	\item ($\forall_SI$) Suppose that $t(x)$ is such that $\sr{[t](y)}{\hr{\alpha[y/x]}{A[y/x]}{\beta[y/x]}}$, where $y\equiv x$ or $y$ is not free in $\hr{\alpha}{A}{\beta}$, and $y$ is not free in $\Gamma$. Then since $y$ is not free in any of the assumptions $\sr{x_i}{A_i}$, we can deduce in $\meta$ that 
		\begin{equation*}
		\forall x^D\, \sr{[t](x)}{\hr{\alpha}{A}{\beta}}
		\end{equation*}
	and therefore $\sr{\lambda x.[t]}{\forall x\, \hr{\alpha}{A}{\beta}}$, and thus (just as for $\implies_SI$) we have 
	\begin{equation*}
	\sr{[\lambda x.t]}{\hr{\gamma}{\forall x\, \hr{\alpha}{A}{\beta}}{\gamma}}
	\end{equation*}
	for any $\gamma$.
		
	\medskip
		
	\item ($\forall_SE$) Suppose that $\sr{[s]}{\hr{\alpha}{\forall x\, \hr{\beta}{A}{\gamma}}{\beta[t/x]}}$ and that $[\alpha](\pi)$ holds. Then we have $\sr{f}{\forall x\, \hr{\beta}{A}{\gamma}}$ and $[\beta][t/x](\pi_1)$ for $\pair{f,\pi}:=[s]\pi$. Now, using Lemma \ref{lem:emb:terms} we have $[st]\pi=ft\pi_1$ for the natural interpretation of $t$ in $\meta$, since we can prove in $\meta$ that
	\begin{equation*}
	\sr{ft}{\hr{\beta[t/x]}{A[t/x]}{\gamma[t/x]}}
	\end{equation*}
	it follows that $\sr{a}{A[t/x]}$ and $[\gamma][t/x](\pi_2)$ for $\pair{a,\pi_2}:=ft\pi_1$, and therefore we have shown that $\sr{[st]}{\hr{\alpha}{A[t/x]}{\gamma[t/x]}}$.
	
	\medskip
	
	\item ($\exists_SI$) If $\sr{[s]}{\hr{\alpha}{A[t/x]}{\beta}}$ and $[\alpha](\pi)$ then $\sr{a}{A[t/x]}$ and $[\beta](\pi_1)$ for $\pair{a,\pi_1}:=[s]\pi$. By Lemma \ref{lem:realizability:subs} we therefore have $(\sr{a}{A})[t/x]$, and therefore $\sr{\pair{t,a}}{\exists x\, A}$. Observing (using Lemma \ref{lem:emb:terms}) that $[t\circ s]\pi=\pair{t,a,\pi_1}$, we have shown that $\sr{[t\circ s]}{\hr{\alpha}{\exists x\, A}{\beta}}$.
	
	\medskip
	
	\item ($\exists_SE$) Suppose that $s$ and $t(x,z)$ are such that $\sr{[s]}{\hr{\alpha}{\exists x\, A}{\beta}}$ and
	\begin{equation*}
	\sr{z}{A[y/x]}\implies \sr{[t](y,z)}{\hr{\beta}{C}{\gamma}}
	\end{equation*}
	where $y\equiv x$ or $y$ is not free in $A$, and $y$ is also not free in $C$, $\alpha$, $\beta$, $\gamma$ or $\Gamma$. By Lemma \ref{lem:realizability:subs} that $\sr{z}{A[y/x]}=(\sr{z}{A})[y/x]=\sr{\pair{y,z}}{\exists x\, A}$ we therefore have
	\begin{equation*}
	\sr{\pair{y,z}}{\exists x\, A}\implies \sr{[t](y,z)}{\hr{\beta}{C}{\gamma}}
	\end{equation*}
	Now, applying Lemma \ref{lem:currying} to $\Delta,\Gamma,y:D,z:\srtype{A}\vdash t:\srtype{C}$, we have
	\begin{equation*}
	[(\lambda^\ast v.t)s]\pi=[t](e,a)\pi_1
	\end{equation*}
	for $\pair{e,a,\pi_1}:=[s]\pi$. Now, if $[\alpha](\pi)$ holds, then we have $\sr{\pair{e,a}}{\exists x\, A}$ and $[\beta](\pi_1)$, and therefore since $\sr{[t](e,a)}{\hr{\beta}{C}{\gamma}}$, we have $\sr{c}{C}$ and $[\gamma](\pi_2)$ for $\pair{c,\pi_2}=[t](e,a)\pi_1=[(\lambda^\ast v.t)s]\pi$, and thus we have shown that $\sr{[(\lambda^\ast v.t)s]}{\hr{\alpha}{C}{\gamma}}$.
	
	\medskip
	
	\item ($cons$) If $\alpha\hlpr\beta$ and $\gamma\hlpr\delta$ then by Lemma \ref{lem:state:emb:sound} both $[\alpha](\pi)\implies[\beta](\pi)$ and $[\gamma](\pi)\implies [\delta](\pi)$ are provable in $\meta$ (respectively $\meta+\axmeta$ for the general version of the theorem) for any $\pi:S$. It is then easy to show that if $\sr{[t]}{\hr{\beta}{A}{\gamma}}$ then we also have $\sr{[t]}{\hr{\alpha}{A}{\delta}}$. 
	
	\medskip
	
		\item ($cond$) Suppose that $\sr{[s]}{\hr{\alpha\wedge\gamma}{A}{\delta}}$ and $\sr{[t]}{\hr{\beta\wedge\gamma}{A}{\delta}}$. We claim that
	\begin{equation*}
	\sr{[\ite{\alpha}{s}{t}]}{\hr{\gamma}{A}{\delta}}
	\end{equation*}
	To prove this, suppose that $[\gamma](\pi)$ holds. Since $\hlpr\alpha\vee \beta$ then $[\alpha](\pi)\vee [\beta](\pi)$ is provable in $\meta$, and so we consider two cases. Let $\{x_1,\ldots,x_n\}$ be the free variables of $\alpha$. If $[\alpha](\pi)$ holds, then
	\begin{equation*}
	[\ite{\alpha}{s}{t}]\pi=\statecase{\alpha}{\pair{x_1,\ldots,x_n}}{\pi}{([s]\pi)}{([t]\pi)}=[s]\pi
	\end{equation*}
	and since then $[\alpha](\pi)\wedge [\gamma](\pi)$ we have $\sr{a}{A}$ and $[\delta](\pi_1)$ for $\pair{a,\pi_1}:=[s]\pi$. On the other hand, if $[\beta](\pi)$ holds, then by an analogous argument we can show that $\sr{a}{A}$ and $[\delta](\pi_1)$ for $\pair{a,\pi_1}:=[t]\pi=[\ite{\alpha}{s}{t}]\pi$, and we are done.

\end{itemize}\medskip
The extension of the soundness theorem to $\ml+\axh+\axm$ is straightforward, as the soundness proof is modular and so any axioms along with their realizers can be added. The first condition is needed so that Lemma \ref{lem:state:emb:sound} (needed for the $cons$ rule) continues to apply.

For the free variable condition that the free variables of $t$ are contained in those of $\Gamma$, $\hr{\alpha}{A}{\beta}$ and  $\srtype{\Gamma}$, if this were not the case, we could simply ground those variables with a canonical constant $c:D$ and we would still have $\sr{\tilde t}{\hr{\alpha}{A}{\beta}}$ for the resulting term $\tilde t$.
\end{proof}

\begin{cor}[Program extraction]
\label{cor:extraction}
Suppose that the sentence
\begin{equation*}
\mlpr\hr{\alpha}{\forall x\, \hr{\beta}{\exists y\, P(x,y)}{\gamma(x)}}{\beta}
\end{equation*}
is provable in $\ml+\axm$. Then we can extract a closed realizing term $t:D\to D\times C$ in $\mt+\axmt$ such that defining $g:D\to S\to D\times S$ by $gx\pi:=\pair{a,\pi_2}$ for $\pair{f,\pi_1}:=[t]\pi$ and $\pair{a,(),\pi_2}:=fx\pi_1$, we have
\begin{equation*}
\forall \pi^S([\alpha](\pi)\implies \forall x^D\, (P(x,\proj_0(gx\pi))\wedge [\gamma](x)(\proj_1(gx\pi))))
\end{equation*}
provably in $\meta+\axmeta$.
\end{cor}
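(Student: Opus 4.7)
The plan is to derive the corollary as a direct instance of the soundness theorem followed by a mechanical unwinding of the realizability relation. Applying Theorem~\ref{thm:main} to the given closed proof delivers a closed realizer $t$ of $\mt+\axmt$ whose type is $\srtype{\forall x\,\hr{\beta}{\exists y\, P(x,y)}{\gamma(x)}} = D \to D\times C$ and which satisfies $\sr{[t]}{\hr{\alpha}{\forall x\,\hr{\beta}{\exists y\, P(x,y)}{\gamma(x)}}{\beta}}$ provably in $\meta+\axmeta$. Since the target formula is a sentence, the free-variable clause at the end of Theorem~\ref{thm:main} is vacuous and no grounding step is needed.

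Next I would peel off the clauses of Definition~\ref{def:realizability:relation} one layer at a time. Fix $\pi$ with $[\alpha](\pi)$ and set $\pair{f,\pi_1}:=[t]\pi$; the outer Hoare-triple clause yields $\sr{f}{\forall x\,\hr{\beta}{\exists y\, P(x,y)}{\gamma(x)}}$ together with $[\beta](\pi_1)$. The $\forall$-clause specialises this to $\sr{fx}{\hr{\beta}{\exists y\, P(x,y)}{\gamma(x)}}$ for every $x^D$. Instantiating the inner Hoare-triple clause at $\pi_1$ and writing $\pair{\pair{a,()},\pi_2}:= fx\pi_1$ produces a realizer $\pair{a,()}$ of $\exists y\,P(x,y)$ together with a state $\pi_2$ satisfying $[\gamma(x)](\pi_2)$. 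Finally, the existential and atomic clauses collapse $\sr{\pair{a,()}}{\exists y\, P(x,y)}$ into the plain formula $P(x,a)$, so we obtain $P(x,a)\wedge [\gamma(x)](\pi_2)$.

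Reading off the definition of $g$ from the statement, $gx\pi=\pair{a,\pi_2}$, whence $\proj_0(gx\pi)=a$ and $\proj_1(gx\pi)=\pi_2$, and the desired conjunction falls out universally in $x$ and $\pi$. I expect no genuine mathematical obstacle; the only care required is bureaucratic, namely threading the state through the nested monadic pair structure of $[t]\pi$, $fx\pi_1$ and the existential realizer, and matching it against the flat triple notation $\pair{a,(),\pi_2}$ employed in the statement (under the convention $\pair{x,y,z}\equiv\pair{x,\pair{y,z}}$ of the metatheory, modulo the natural identifications inherited from the monadic translation of pair types).
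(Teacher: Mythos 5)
Your proposal is correct and is exactly the argument the paper intends: the corollary is stated without an explicit proof precisely because it is a direct instance of Theorem~\ref{thm:main} followed by unwinding the clauses of Definition~\ref{def:realizability:relation}, which is what you do. You also rightly note the only delicate point, namely reconciling the nested pair $\pair{\pair{a,()},\pi_2}$ produced by the monadic translation with the flat triple $\pair{a,(),\pi_2}$ in the statement.
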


\subsection{Simplification and removal of unit types}

In presentations of modified realizability that use product types instead of type sequences, it is common to introduce the notion of a Harrop formula (a formula that does not contain disjunction or existential quantification in a positive position) and define realizability in a way that all Harrop formulas have unit realizability type, so that e.g. $\srtype{\forall x\, (P\wedge Q)}=1$ for atomic predicates $P$ and $Q$, rather than $\srtype{\forall x\, (P\wedge Q)}=D\to 1\times 1$ as for us. We have avoided this simplification earlier on, as it would have added additional cases and bureaucracy to our soundness theorem. However, we can compensate retroactively for this choice by introducing equivalences on types that eliminate unit types, namely the closure under contexts of
\begin{equation*}
1\times X\simeq X\simeq X\times 1 \ \ \ (1\to X)\simeq X \ \ \ (X\to 1) \simeq 1
\end{equation*}
along with corresponding equivalences on terms, also closed under contexts:
\begin{equation*}
t^{1\times X}\simeq \proj_1(t)^X \ \ \ t^{X\times 1}\simeq\proj_0(t)^X \ \ \ t^{1\to X}\simeq t() \ \ \ t^X\simeq \lambda x^1.t \ \ \ t^{X\to 1}\simeq()
\end{equation*}
For example, in Corollary \ref{cor:extraction} we would then have 
\begin{equation*}
[t]\pi:(D\to S\to D\times 1\times S)\times S\simeq (D\to S\to D\times S)\times S \ \ \ \mbox{and} \ \ \ gx\pi\simeq fx\pi_1.
\end{equation*}
For us, the equivalence relation $\simeq$ will not play a formal role in the paper, but will be used to provide simplified descriptions of extracted programs. In particular, we do not rely on it in any way for the main results. To incorporate the simplifications formally into our framework, the most obvious route would involve defining a more elaborate realizability relation as indicated above, with the interpretation of Harrop formulas treated separately (as in the recent and related paper \cite{berger-tsuiki:21:fixedpoint}), which would then also generate many new (but routine) cases in the soundness proof. Given that we are already introducing a new and nonstandard realizability relation, in this article we prefer to work with a simple interpretation and use the equivalences above in an informal way to more concisely describe the meaning of extracted programs.

\subsection{Examples of program extraction}

We now continue the short illustrative examples we outlined in Section \ref{sec:formal:intuition}.

\begin{exa}[Simple read-write]
\label{ex:readwrite:program}
In Example \ref{ex:readwrite:logic} we considered a state where three actions were possible (writing to the state, performing a calculation, and reading the output from the state). We can formalise these three actions semantically in the metatheory $\meta$ by including three constants in $\axmeta$, namely $c_1:D\to S\to S$, $c_2:S\to S$ and $c_3:S\to D$, along with the characterising axioms:\medskip
\begin{enumerate}

	\item $\stored(x,c_1x\pi)$,\smallskip
	
	\item $\stored(x,\pi)\implies \solved(x,c_2\pi)$,\smallskip
	
	\item $\solved(x,\pi)\implies P(x,c_3\pi)$.

\end{enumerate}\medskip
While we are able to use these constants to form terms in $\meta$ such as $\lambda \pi,\pi_1,x\, . \,\pair{c_1x\pi,c_2\pi_1}$, which could be viewed as non-sequential in the sense that we take two input states as arguments, we can force them to be applied in a sequential, call-by-value manner by adding three corresponding constants to our term calculus $\mt$, namely including $\swrite:D\to C$, $\scalc:C$ and $\sread:D\times C$ in $\axmt$, along with the embedding rules\medskip
\begin{itemize}

	\item $[\swrite]\pi:=\pair{\lambda x,\pi'\, . \, \pair{(),c_1x\pi'},\pi}\simeq \pair{c_1,\pi}$,\smallskip
	
	\item $[\scalc]\pi:=\pair{(),c_2\pi}$ so that $[\scalc]\simeq c_2$,\smallskip
	
	\item $[\sread]\pi:=\pair{c_3\pi,(),\pi}\simeq\pair{c_3\pi,\pi}$.

\end{itemize}\medskip
and then restricting out attention to terms of the form $[t]$ for $t\in\mt+\{\swrite,\scalc,\sread\}$. We can then prove the following in $\meta$ i.e. that all axioms in $\axm$ can be realised:\medskip
\begin{itemize}

	\item $\sr{[\swrite(x)]}{\hr{\alpha}{\top}{\stored(x)}}$,\smallskip
	
	\item $\sr{[\scalc]}{\hr{\stored(x)}{\top}{\solved(x)}}$,\smallskip
	
	\item $\sr{[\sread]}{\hr{\solved(x)}{\exists y\, P(x,y)}{\top}}$.\medskip

\end{itemize}
and thus Theorem \ref{thm:main} applies to $\ml+\axh+\axm$ for $\axh=\emptyset$. In particular, we have
\begin{equation*}
\sr{[t]}{\hr{\beta}{\forall x\, \hr{\alpha}{\exists y\, P(x,y)}{\top}}{\beta}}
\end{equation*}
for $t:=\lambda x\, . \, ((\swrite(x)\seq\scalc)\seq\sread)$ where $\seq$ is sequential composition operator from Definition \ref{def:star}. A formal derivation of this term from the corresponding proof given in Example \ref{ex:readwrite:logic} is as follows:
\begin{equation*}
\vlderivation{
	\vlin{}{\forall_SI}
		{\vdash\lambda x\,.\,((\swrite(x)\seq \scalc)\seq \sread):D\to D\times C}
		{
			\vlin{}{\wedge_SE_L}
				{x:D\vdash(\swrite(x)\seq \scalc)\seq \sread:D\times C}
				{
					\vliin{}{\wedge_SI}
						{x:D\vdash (\swrite(x)\seq \scalc)\circ \sread: C\times D\times C}
						{
							\vlin{}{\wedge_SE_L}
								{x:D\vdash\swrite(x)\seq\scalc:C}
								{
									\vliin{}{\wedge_SI}
										{x:D\vdash \swrite(x)\circ\scalc:C\times C}
										{\vlhy{x:D\vdash\swrite(x):C}}
										{\vlhy{x:D\vdash\scalc:C}}
								}
						}
						{
							\vlhy{x:D\vdash\sread:D\times C}
						}
				}
		}
}
\end{equation*}
\end{exa}

\begin{exa}[Fixed-length array sorting]
\label{ex:sort:program}
In Example \ref{ex:sort:logic} we considered a situation where we are allowed a single action on our state, namely to swap elements. Analogously to the previous example, we can formalise this in our semantic environment $\meta$ by adding to $\axmeta$ constants $c_{l,l'}:S\to S$ for each pair $l,l'\in \{1,2,3\}$ along with the axiom
\begin{equation*}
[\alpha](\pi)\implies [\alpha[l\leftrightarrow l']](c_{l,l'}\pi) 
\end{equation*}
ranging over state formulas $\alpha$ of the form (\ref{eqn:conjform}) and locations $l,l'\in \{1,2,3\}$ of $\ml$, together with axioms corresponding to those of $\axh$ i.e.
\begin{equation*}
\begin{aligned}
[1\leq 2\wedge 2\leq 3](\pi)\implies\sorted(\pi) \ \ \ \mbox{and} \ \ \ [l\leq l'\vee l'\leq l](\pi)
\end{aligned}
\end{equation*}
Similarly, for each $l,l'\in \{1,2,3\}$ we add a term $\swap_{l,l'}: C$ to $\axmt$ and define $[\swap_{l,l'}]\pi:=\pair{(),c_{l,l'}\pi}$ so that
\begin{equation*}
\sr{\swap_{l,l'}}{\hr{\alpha}{\top}{\alpha[l\leftrightarrow l']}}
\end{equation*}
A derivation of a closed term $t:C$ of $\mt+\{\swap_{l,l'}\}$ such that $\sr{[t]}{\hr{\top}{\top}{\sorted}}$ is given below. In particular, we can prove in $\meta$ that $\forall \pi^S\, \sorted(\proj_1([t]\pi))$, and so the term $\lambda \pi\, . \,\proj_1([t]\pi):S\to S$ acts as a sorting program for arrays of length three. For an extracted term $t$ corresponding to the proof given in Example \ref{ex:sort:logic}, first we interpret $\mathcal{D}_1$ as
\begin{equation*}
\vlderivation{
\vliin{}{cond[2\leq 3\vee 3\leq 2]}
{\vdash t_1:=\ite{(2\leq 3)}{(\unit)}{(\swap_{2,3})}:C}
{
\vlin{}{cons}{\vdash\unit:C}
	{\vlhy{\vdash\unit:C}}
}
{
\vlin{}{cons}{\vdash\swap_{2,3}:C}{\vlin{}{2\leftrightarrow 3}{\vdash\swap_{2,3}:C}{}}
}
}
\end{equation*}
and define $t_1:=\ite{(2\leq 3)}{(\unit)}{(\swap_{2,3})}$. Now $\mathcal{D}_2$ is interpreted as
\begin{equation*}
\vlderivation{
\vlin{}{\wedge_S E_L}
	{\vdash t_2:=\swap_{1,2}\seq t_1:C}
	{\vliin{}{\wedge_S I}{\vdash \swap_{1,2}\circ t_1:C\times C}
		{\vlin{}{1\leftrightarrow 2}{\vdash\swap_{1,2}:C}{\vlhy{}}}
		{\vlin{}{}{\vdash t_1:C}{\vlhy{\mathcal{D}_1}}}
	}	
}
\end{equation*}
where we define $t_2:=\swap_{1,2}\seq t_1:C$. Continuing, $\mathcal{D}_3$ is interpreted as:
\begin{equation*}
\vlderivation{
	\vliin{}{cond[2\leq 1\vee 1\leq 2]}{t_3:=\ite{(2\leq 1)}{t_2}{(\unit)}:C}
		{\vlin{}{}{\vdash t_2:C}{\vlhy{\mathcal{D}_2}}}
		{\vlin{}{cons}{\vdash\unit:C}{\vlhy{\vdash\unit:C}}}
}
\end{equation*}
where $t_3:=\ite{(2\leq 1)}{t_2}{(\unit)}$, and finally
\begin{equation*}
\vlderivation{
	\vlin{}{\wedge_SE_L}
		{\vdash t:=(\ite{(2\leq 3)}{(\unit)}{(\swap_{2,3})})\seq t_3:C}
		{
			\vliin{}{\wedge_SI}{\vdash (\ite{(2\leq 3)}{(\unit)}{(\swap_{2,3})})\circ t_3:C\times C}
				{
					\vliin{}{cond[2\leq 3\vee 3\leq 2]}{\vdash\ite{(2\leq 3)}{(\unit)}{(\swap_{2,3})}:C}
			{\vlhy{\vdash\unit:C}}
			{\vlin{}{2\leftrightarrow 3}{\vdash\swap_{2,3}:C}{\vlhy{}}}	
				}
				{
				\vlin{}{}{\vdash t_3:C}{\vlhy{\mathcal{D}_3}}
				}
		}
}
\end{equation*}
\end{exa}

\section{An extension to arithmetic}
\label{sec:arithmetic}

We now present an extension of our framework to a stateful version of first-order intuitionistic arithmetic. On the logic side, we will add not only a stateful induction rule, but also a Hoare-style while rule for iteration over the natural numbers. On the computational side, these will be interpreted by stateful recursion in all finite types, along with a controlled while loop. The addition of these constants will allow us to extract programs that are more interesting than those obtainable from proofs in pure predicate logic, and which can be clearly compared to well-known stateful algorithms. To exemplify this, we will present a formally synthesised version of insertion sort, and we stress that by further extending our framework with additional rules and terms, we would be able to extract an even richer variety of combined functional/stateful programs.

\subsection{The system $\ma$: First-order arithmetic with state}
\label{sec:arithmetic:logic}

Our system of stateful intuitionistic arithmetic $\ma$ builds on $\ml$ just as ordinary first-order Heyting arithmetic builds on first-order predicate logic. In both cases, we introduce a constant $0$, a unary successor symbol $\suc$, symbols for all primitive recursive functions, and our predicate symbols now include an equality relation $=$. In what follows we write $x+1$ instead of $\suc(x)$. The axioms and rules of $\ma$ are, in turn, analogous to the additional axioms and rules we would require in ordinary first-order arithmetic: They include all axioms and rules of $\ml$ (based now on the language of $\ma$), along with a collection of additional axioms and rules. These comprise not only basic axioms and rules for equality and the successor, and an induction rule (all now adapted to incorporate the state), but also a new while rule for stateful iteration, which now exploits our state and, as we will see, allows us to extract programs that contain while loops. These additional axioms and rules are given in Figure 4.

\begin{figure}[ht]
\label{fig:ma}
\caption{Additional axioms and rules of $\ma$}
\begin{equation*}
\begin{gathered}
\textbf{Axioms and rules for equality}
\\[2mm]
\Gamma\mlpr \hr{\alpha}{t=t}{\alpha}
\qquad
\vlinf{}{}{\Gamma\mlpr \hr{\alpha}{t=s}{\beta}}{\Gamma\mlpr \hr{\alpha}{s=t}{\beta}} 
\qquad
\vlinf{}{}{\Gamma\mlpr \hr{\alpha}{r=t}{\gamma}} {\Gamma\mlpr \hr{\alpha}{r=s}{\beta} \ \ \ \Gamma\mlpr \hr{\beta}{s=t}{\gamma}}
\\[2mm]
\vlinf{}{ext}{\Gamma\mlpr \hr{\alpha}{A(t)}{\gamma(t)}}{\Gamma\mlpr \hr{\alpha}{s=t}{\beta} \ \ \ \Gamma \mlpr \hr{\beta}{A(s)}{\gamma(s)}}
\\[2mm]
\textbf{Axioms and rules for arithmetical function symbols}
\\[2mm]
\Gamma\mlpr\hr{\alpha}{\suc(t)\neq 0}{\alpha}
\qquad
\vlinf{}{}{\hr{\alpha}{s=t}{\beta}}{\hr{\alpha}{\suc(s)=\suc(t)}{\beta}}
\\[2mm]
\Gamma\mlpr\hr{\alpha}{l=r}{\alpha} \ \ \ \mbox{where $l=r$ ranges across defining equations for prim. rec. functions}
\\[2mm]
\textbf{Induction rule}
\\[2mm]
\vliinf{}{ind}{\Gamma\mlpr\hr{\gamma}{\forall x\,\hr{\alpha}{A(x)}{\beta(x)}}{\gamma}}{\Gamma\mlpr \hr{\alpha}{A(0)}{\beta(0)}}{\Gamma,A(x)\mlpr\hr{\beta(x)}{A(x+1)}{\beta(x+1)}}
\\[2mm]
\textbf{While rule (over natural numbers)}
\\[2mm]
\vliiinf{}{while}{\Gamma,A(x)\mlpr \hr{\alpha(x)}{B}{\beta}}{\mathcal{A}_1}{\mathcal{A}_2}{\mathcal{A}_3}
\\[2mm]
\mathcal{A}_1:=\Gamma,A(x+1)\mlpr\hr{\gamma(x+1)\wedge\alpha(x+1)}{A(x)}{\alpha(x)}
\\
\mathcal{A}_2:=\Gamma,A(x+1)\mlpr\hr{\neg\gamma(x+1)\wedge \alpha(x+1)}{B}{\beta}
\\
\mathcal{A}_3:=\Gamma,A(0)\mlpr\hr{\alpha(0)}{B}{\beta}
\\[2mm]
\mbox{for $ind$ and $while$, $x$ is not free in $\Gamma$, and for $while$ it is not free in $B$ or $\beta$}
\end{gathered}
\end{equation*}
\end{figure} 
Our formulation of stateful arithmetic follows the same basic idea as the construction of stateful predicate logic, incorporating standard rules but keeping track of an ambient state in a call-by-value manner, and adding new rules that explicitly correspond to stateful constructions. In particular, Proposition \ref{stateful:extension} clearly extends to $\ma$, as the usual axioms and rules of arithmetic can be embedded into those of $\ma$:
\begin{prop}
\label{stateful:extension:arithmetic}
For any formula $A$ of $\ha$ and state formula $\alpha$, define the main formula $A_\alpha$ of $\ma$ as in Proposition \ref{stateful:extension}. Then whenever $\Gamma\plpr A$ is provable in $\ha$, we have that $\Gamma_\alpha,\Delta\mlpr\hr{\alpha}{A_\alpha}{\alpha}$ is provable in $\ma$, where $\Delta$ is arbitrary and $\Gamma_\alpha:=(A_1)^{u_1}_\alpha,\ldots,(A_n)^{u_n}_\alpha$ for $\Gamma:=A_1^{u_1},\ldots,A_n^{u_n}$.
\end{prop}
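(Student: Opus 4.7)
The proof proceeds by induction on derivations in $\ha$, extending the argument of Proposition \ref{stateful:extension}. All propositional and first-order quantifier cases have already been handled there, so it suffices to treat the axioms and rules newly introduced by arithmetic: those for equality, the successor, the defining equations for primitive recursive functions, and the induction rule.

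For the basic axioms (reflexivity, symmetry, transitivity and extensionality of equality, $\suc(t)\neq 0$, the congruence rule for $\suc$, and the defining equations of primitive recursive functions), each of them has a direct counterpart in Figure 4 when the state formula $\alpha$ is chosen as both pre- and postcondition. The required $\ma$-derivation is therefore an instance of that axiom, or a short combination of such instances, possibly followed by an application of the consequence rule $cons$ to reconcile state formulas inherited from the inductive hypothesis.

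The only substantive case is the induction rule. Suppose the last step of the $\ha$-derivation produces $\Gamma \plpr \forall x\, A(x)$ from premises $\Gamma \plpr A(0)$ and $\Gamma, A(x) \plpr A(x+1)$, with $x$ not free in $\Gamma$. By the inductive hypothesis we obtain
\begin{equation*}
\Gamma_\alpha,\Delta \mlpr \hr{\alpha}{A_\alpha(0)}{\alpha} \quad \mbox{and} \quad \Gamma_\alpha, A_\alpha(x),\Delta \mlpr \hr{\alpha}{A_\alpha(x+1)}{\alpha}
\end{equation*}
Applying the $ind$ rule of $\ma$ with the choice $\beta(x):=\alpha$ (constant in $x$) and $\gamma:=\alpha$ then immediately yields
\begin{equation*}
\Gamma_\alpha,\Delta \mlpr \hr{\alpha}{\forall x\, \hr{\alpha}{A_\alpha}{\alpha}}{\alpha}
\end{equation*}
which is precisely $\hr{\alpha}{(\forall x\, A)_\alpha}{\alpha}$, as required. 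The side condition that $x$ is not free in $\Gamma_\alpha$ follows directly from the corresponding condition in the original $\ha$-derivation, with renaming of bound variables if needed.

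Hence no genuine obstacle arises in the proof; the only subtlety worth noting is that the while rule of $\ma$ is not invoked here, since $\ha$ contains no native analogue of it — that rule becomes relevant only when extracting programs from proofs that genuinely exploit the state, as in the arithmetical examples envisaged later in Section~\ref{sec:arithmetic}.
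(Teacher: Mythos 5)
Your proof is correct and takes essentially the approach the paper intends: the paper gives no explicit proof of this proposition, remarking only that Proposition \ref{stateful:extension} ``clearly extends to $\ma$, as the usual axioms and rules of arithmetic can be embedded into those of $\ma$'', which is exactly the rule-by-rule extension of the induction over derivations that you carry out. Your identification of the $ind$ rule (instantiated with $\beta(x):=\alpha$ and $\gamma:=\alpha$) as the only substantive new case, and your observation that the $while$ rule is never needed since $\ha$ has no counterpart to it, are both accurate.
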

We can also derive a natural extensionality rule from our stateful equality rules, which assures us that whenever $s=t$ in ordinary Heyting arithmetic, then we can replace $s$ by $t$ for stateful formulas:
\begin{prop}
\label{stateful:extensionality}
Suppose that $\plpr s=t$ is provable in $\ha$. Then from $\Gamma\mlpr\hr{\alpha(s)}{A(s)}{\beta(s)}$ we can derive $\Gamma\mlpr\hr{\alpha(t)}{A(t)}{\beta(t)}$ in $\ma$.
\end{prop}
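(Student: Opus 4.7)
The plan is to leverage Proposition \ref{stateful:extension:arithmetic} together with two applications of the ext rule---one forward, and one ``backward'' that exploits ext's ability to substitute in the state-formula postcondition as well as in the main formula. By hypothesis $\plpr s = t$ in $\ha$, Proposition \ref{stateful:extension:arithmetic} (applied to the atomic formula $s = t$, which is unchanged by the translation $(\cdot)_\delta$) immediately lifts to $\Gamma \mlpr \hr{\delta}{s = t}{\delta}$ for any state formula $\delta$, and invoking symmetry in $\ha$ first gives the companion $\Gamma \mlpr \hr{\delta}{t = s}{\delta}$.

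Step one is the easy half: substitute in the main formula and postcondition by a direct application of ext. Taking the equality premise to be $\Gamma \mlpr \hr{\alpha(s)}{s = t}{\alpha(s)}$ and the substitution premise to be the hypothesis $\Gamma \mlpr \hr{\alpha(s)}{A(s)}{\beta(s)}$, ext produces $\Gamma \mlpr \hr{\alpha(s)}{A(t)}{\beta(t)}$.

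Step two, the harder half, is to change the precondition from $\alpha(s)$ to $\alpha(t)$. The idea is to derive an auxiliary ``prelude'' triple $\Gamma \mlpr \hr{\alpha(t)}{\top}{\alpha(s)}$ and then sequentially compose it with the result of step one. To obtain the prelude, I apply ext using the symmetric equality, with trivial main-formula template $A(\cdot) := \top$ and postcondition template $\gamma(\cdot) := \alpha(\cdot)$: the equality premise is $\Gamma \mlpr \hr{\alpha(t)}{t = s}{\alpha(t)}$ and the substitution premise is the axiom $\Gamma \mlpr \hr{\alpha(t)}{\top}{\alpha(t)}$, so ext delivers $\Gamma \mlpr \hr{\alpha(t)}{\top}{\alpha(s)}$.

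Composing via $\wedge_S I$ on $\Gamma \mlpr \hr{\alpha(t)}{\top}{\alpha(s)}$ and $\Gamma \mlpr \hr{\alpha(s)}{A(t)}{\beta(t)}$ yields $\Gamma \mlpr \hr{\alpha(t)}{\top \wedge A(t)}{\beta(t)}$, and one application of $\wedge_S E_R$ then extracts the desired $\Gamma \mlpr \hr{\alpha(t)}{A(t)}{\beta(t)}$. The main conceptual point to recognise---and what I expect to be the single nontrivial observation---is that ext is flexible enough to substitute in a state formula via the postcondition template $\gamma$, so ``backward'' substitution in the precondition need not be proved by a separate induction on the structure of $\alpha$ but can be simulated by threading the substitution through a trivially-true preparatory Hoare triple.
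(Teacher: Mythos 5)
Your proposal is correct and follows essentially the same route as the paper's own proof: one forward application of $ext$ to replace $s$ by $t$ in the main formula and postcondition, then a second application of $ext$ (with main formula $\top$ and postcondition template $\alpha(\cdot)$, using the symmetric equality) to produce the prelude triple $\Gamma\mlpr\hr{\alpha(t)}{\top}{\alpha(s)}$, followed by sequential composition via $\wedge_S I$ and a conjunction elimination. The only cosmetic difference is the label on the final elimination step, where your choice of $\wedge_S E_R$ is in fact the one consistent with the rule as stated in Figure 2.
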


\begin{proof}
By Proposition \ref{stateful:extension:arithmetic} for $\alpha:=\alpha(s)$ we have $\Gamma\mlpr\hr{\alpha(s)}{s=t}{\alpha(s)}$ and thus using the extensionality rule in $\ma$ we can derive
\begin{equation*}
\vlinf{}{ext}{\Gamma\mlpr\hr{\alpha(s)}{A(t)}{\beta(t)}}{\Gamma\mlpr \hr{\alpha(s)}{s=t}{\alpha(s)} \ \ \  \Gamma \mlpr \hr{\alpha(s)}{A(s)}{\beta(s)}}
\end{equation*}
Since $\plpr t=s$ must also be provable in $\ha$, another instance of Proposition \ref{stateful:extension:arithmetic} for $\alpha:=\alpha(t)$ along with the true axiom in $\ma$ gives us
\begin{equation*}
\vlinf{}{ext}{\Gamma\mlpr \hr{\alpha(t)}{\top}{\alpha(s)}}{\Gamma\mlpr \hr{\alpha(t)}{t=s}{\alpha(t)} \ \ \ \Gamma \mlpr \hr{\alpha(t)}{\top}{\alpha(t)}}
\end{equation*}
Putting these together we obtain
\begin{equation*}
\vlderivation{
\vlin{}{\wedge_SE_L}{\Gamma\mlpr\hr{\alpha(t)}{A(t)}{\beta(t)}}
	{\vliin{}{\wedge_S I}{\Gamma\mlpr\hr{\alpha(t)}{\top\wedge A(t)}{\beta(t)}}
	{\vlhy{\Gamma\mlpr \hr{\alpha(t)}{\top}{\alpha(s)}}}
	{\vlhy{\Gamma\mlpr\hr{\alpha(s)}{A(t)}{\beta(t)}}}	
	}
}
\end{equation*}
which completes the derivation.
\end{proof}

\subsection{An extended term calculus $\mta$}
\label{sec:arithmetic:term}

In order to give derivations in $\ma$ a computation interpretation, we need to extend our term calculus $\mt$ to include a recursor (for induction) and a controlled while loop (for the while rule). The remaining new axioms and rules of $\ma$ are dealt with in a straightforward manner.

To be precise: the theory $\mta$ is defined to be the instance of $\mt$ for the case of arithmetic, with function symbols for zero, successor and all primitive recursive functions. Accordingly, we rename the base type $D$ to $\nat$. In addition to the terms of $\mt$, we add terms $\rec{e}{e}$ and $\while{\gamma[z]}{e}{e}{e}{e}$ to our grammar, where $\gamma[z]$ ranges over state formulas of $\ml$ with a specified free variable $z$. The typing rules for these new terms are
\begin{equation*}
\frac{\Gamma\vdash s:X \ \ \ \Gamma\vdash t:\nat \to X \to X}{\Gamma\vdash\rec{s}{t}:\nat\to X}
\end{equation*}
for the recursor, while for the while loop we have
\begin{equation*}
\frac{\Gamma\vdash r:\nat\to X\to X \ \ \ \Gamma\vdash s:\nat\to X\to Y \ \ \ \Gamma\vdash t:X\to Y \ \ \ \Gamma\vdash u:\nat}{\Gamma\vdash\while{\gamma[z]}{r}{s}{t}{u}:X\to Y}
\end{equation*}
under the additional variable condition that $z\notin\Gamma$, but $x:\nat\in\Gamma$ for all free variables of $\gamma[z]$ outside of $z$. Note that we do not consider $z$ a free variable of $\while{\gamma[z]}{r}{s}{t}{a}$, but rather a placeholder for the loop condition. In order to give the appropriate semantics to our terms, we must add to our metatheory $\meta$ axioms and rules for arithmetic in all finite types, including the ability to define functions of arbitrary type via recursion over the natural numbers, along the lines of $\mathrm{E}\mbox{-}\mathrm{HA}^\omega$ \cite{troelstra:73:book} (though as before the precise details are not important). We then define:
\begin{itemize}

	\item $[\rec{s}{t}]\pi:=\pair{R_f,\pi_1}$ for $\pair{f,\pi_1}:=[t]\pi$, where
	\begin{equation}
	\label{def:R}
	\begin{aligned}
	R_f0\pi&:=[s]\pi\\
	R_f(n+1)\pi&:=ga\pi'_2 \mbox{ for $\pair{a,\pi'_1}:=R_fn\pi'$ and $\pair{g,\pi'_2}:=fn\pi'_1$}
	\end{aligned}
	\end{equation}
	
	\item $[\while{\gamma[z]}{r}{s}{t}{u}]\pi:=\pair{L_{f,g,h}m,\pi_4}$ where $\pair{f,\pi_1}:=[r]\pi$, $\pair{g,\pi_2}:=[s]\pi_1$, $\pair{h,\pi_3}:=[t]\pi_2$ and $\pair{m,\pi_4}:=[u]\pi_3$, where
	\begin{equation}
	\label{def:L}
	\begin{aligned}
	&L_{f,g,h}0y\pi':=hy\pi'\\
	&L_{f,g,h}(n+1)y\pi'\\ &:=\begin{cases}
		L_{f,g,h}ny'\pi_2 \mbox{ for $\pair{a,\pi'_1}:=fn\pi'$ and $\pair{y',\pi'_2}:=ay\pi'_1$} & \mbox{if $[\gamma][n+1](\pi')$}\\
		by\pi'_1 \mbox{ for $\pair{b,\pi'_1}:=gn\pi'$} & \mbox{if $\neg[\gamma][n+1](\pi')$}
	\end{cases}
	\end{aligned}
	\end{equation}
where in the case distinctions, we would technically speaking need to use the characteristic function $\chi_{\gamma}\pair{x_1,\ldots,n,\ldots, x_k}$ for $\gamma$, with $n$ substituted for the special free variable $z$.
\end{itemize}

\subsection{The soundness theorem for arithmetic}
\label{sec:arithmetic:soundness}

We now need to show that the soundness proof for stateful predicate logic also holds in the extension to arithmetic.
\begin{thm}
\label{thm:main:arithmetic}
The statement of Theorem \ref{thm:main} remains valid if we replace $\ml$ by $\ma$ and $\mt$ by $\mta$.
\end{thm}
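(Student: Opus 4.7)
The plan is to extend the induction on derivations from Theorem~\ref{thm:main} with cases for the new axioms and rules of $\ma$ listed in Figure~4. Since $\ma$ strictly extends $\ml$, every case already treated carries over unchanged, and the genuinely new work lies in the equality and arithmetic axioms, the extensionality rule, the induction rule, and the while rule.

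For the equality, successor, and primitive recursive defining equation axioms, the main formulas are all atomic, so their realizability type is $C$ (i.e.\ unit, up to the equivalence $\simeq$ of the previous section). Reflexivity, the primitive recursive defining equations, and the Peano axiom $\suc(t)\neq 0$ are realized by $\unit$. Symmetry, transitivity, and the successor step rule follow the stateful composition pattern already established for the $\wedge_S$ rules, built out of $\comp$ and $\prr$. The extensionality rule is realized by $\prr(r_1 \comp r_2)$, where $r_1, r_2$ are the extracted realizers of the two premises: starting from $\pi$ satisfying $\alpha$, applying $r_1$ yields the atomic realizability condition $s=t$ inside $\meta$, after which the full extensionality of $\meta$ identifies $\sr{a}{A(s)}$ with $\sr{a}{A(t)}$ and $[\gamma(s)]$ with $[\gamma(t)]$ for the value and state produced by $r_2$.

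The induction case is realized by the recursor. Given extracted realizers $s$ of the base and $t$ of the step (with free variable $y:\srtype{A}$ that realizes $A(x)$), I take $\rec{s}{\lambda x.\lambda y.t}$ of type $\nat\to\srtype{A}$, wrapped as in the $\forall_S I$ case so as to realize the outer $\hr{\gamma}{\cdot}{\gamma}$. Correctness is verified in $\meta$ by an inner induction on $n$: the base case reduces to soundness of $s$ via (\ref{def:R}); the step case unfolds $R_f(n+1)\pi$ into the appropriate composition, applies soundness of $t$ at argument $n$ to the inductive hypothesis, and concludes with $\sr{\cdot}{A(n+1)}$ together with the postcondition $[\beta(n+1)]$ of the resulting state.

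The main obstacle, I expect, is the while rule. Given extracted realizers $r_1, r_2, r_3$ of $\mathcal{A}_1, \mathcal{A}_2, \mathcal{A}_3$, the extracted term has the form $\while{\gamma[z]}{r_1'}{r_2'}{r_3'}{x}$, where each $r_i'$ is $r_i$ suitably $\lambda$-abstracted over the counter variable (and, for $r_1', r_2'$, over the realizer of $A(x+1)$), and $x$ is the outer induction variable. The verification is again by inner induction on the loop counter $n$ in $\meta$, unfolding (\ref{def:L}) to $L_{f,g,h}$. The step case branches on $[\gamma](n+1)(\pi')$, mediated by the characteristic operator $\chi_\gamma$: in the positive branch $r_1'$ takes $\gamma(n+1)\wedge\alpha(n+1)$ down to $\alpha(n)$, feeding the inductive hypothesis; in the negative branch $r_2'$ fires directly from $\neg\gamma(n+1)\wedge\alpha(n+1)$; the base case $n=0$ applies $r_3'$ from $\alpha(0)$. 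The delicate bookkeeping lies in threading the states $\pi_1', \pi_2'$ correctly through (\ref{def:L}) and in matching the placeholder $z$ in $\gamma[z]$ with the current loop counter at each evaluation. Once this meta-induction is complete, the desired realizability of $\hr{\alpha(x)}{B}{\beta}$ follows.
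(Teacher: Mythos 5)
Your proposal is correct and follows essentially the same route as the paper: the same realizers (the atomic axioms by $\unit$ and composition, extensionality via the second projection of a composed pair together with extensionality in $\meta$, the recursor for $ind$, and the controlled while term for $while$), verified in each of the last two cases by an inner induction on $n$ in $\meta$ unfolding the defining equations of $R_f$ and $L_{f,g,h}$ exactly as the paper does. The only cosmetic difference is that the paper makes explicit that the while term is additionally applied to the free realizer $y$ of the hypothesis $A(x)$, which your phrasing leaves slightly implicit.
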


\begin{proof}
We need to extend the proof of Theorem \ref{thm:main:arithmetic} to show that the additional axioms and rules as in Figure \ref{fig:ma} can be realized by a term of the form $[t]$ for $t$ in $\mta$.\medskip
\begin{itemize}

	\item For the non-extensionality equality and arithmetic axioms this is straightforward due to the fact that these are also true in $\meta$: For instance, given a realizer $\sr{[s]}{\hr{\alpha}{u=v}{\beta}}$ and $\sr{[t]}{\hr{\beta}{v=w}{\gamma}}$, we have that $\sr{[s\circ t]}{\hr{\alpha}{u=v\wedge v=w}{\gamma}}$, and since from $u=v\wedge v=w$ we can infer $u=w$ in $\meta$, it follows that $\sr{[p_1(s\circ t)]}{\hr{\alpha}{u=w}{\gamma}}$. The other axioms and rules are even simpler.\smallskip
	
	\item $(ext)$ Extensionality is similarly simple: If $\sr{[s]}{\hr{\alpha}{u=v}{\beta}}$ and $\sr{[t]}{\hr{\beta}{A(u)}{\gamma(u)}}$, then $[\alpha](\pi)$ implies that $u=v$ and $[\beta](\pi_1)$ for $\pair{\ldots,\pi_1}:=[s]\pi$, and therefore $\sr{a}{A(u)}$ and $[\gamma](u)(\pi_2)$ for $\pair{a,\pi_2}:=[t]\pi_1$. Now applying extensionality in $\meta$ to the formula $T(x):=\sr{a}{A(x)}\wedge [\gamma](x)(\pi_2)$, from $u=v$ we have $\sr{a}{A(v)}$ and $[\gamma](v)(\pi_2)$, and thus $\sr{[s\circ t]}{\hr{\alpha}{u=v\wedge A(v)}{\gamma(v)}}$ and therefore $\sr{[p_2(s\circ t)]}{\hr{\alpha}{A(v)}{\gamma(v)}}$.\smallskip
	
	\item $(rec)$ Suppose that $s$ and $t(x,y)$ are such that $\sr{[s]}{\hr{\alpha}{A(0)}{\beta(0)}}$ and $$\sr{[t](x,y)}{\hr{\beta(x)}{A(x+1)}{\beta(x+1)}}$$ assuming $\sr{y}{A(x)}$. We show that $\sr{[\rec{s}{\lambda x,y.t(x,y)}]}{\hr{\gamma}{\forall x\, \hr{\alpha}{A(x)}{\beta(x)}}{\gamma}}$ for any $\gamma$. Since $[\rec{s}{\lambda x,y.t(x,y)}]\pi=\pair{R_f,\pi}$ for $f:=\lambda x.[\lambda y.t(x,y)]$ and $R_f$ as in (\ref{def:R}), it suffices to show that for any $n:\nat$ we have 
	\begin{equation*}
	\sr{R_fn}{\hr{\alpha}{A(n)}{\beta(n)}}
	\end{equation*} 
	We prove this by induction: For the base case, we have $R_f0=[s]$ and the claim holds by assumption. For the induction step, let us assume that $[\alpha](\pi')$ holds, and so by the induction hypothesis we have $\sr{a}{A(n)}$ and $[\beta(n)](\pi'_1)$ for $\pair{a,\pi'_1}:=R_fn$. Since $fn\pi'_1=\pair{g,\pi'_1}$ for $g:=\lambda y.[t](n,y)$, we have that $R_f(n+1)\pi'=[t](n,a)\pi'_1$, and since by the property of $[t]$ we then have $\sr{b}{A(n+1)}$ and $[\beta(n+1)](\pi'_2)$ for $\pair{b,\pi'_2}:=[t](n,a)\pi'_1$, we have shown that $\sr{R_f(n+1)}{\hr{\alpha}{A(n+1)}{\beta(n+1)}}$, which completes the induction.\smallskip
	
	\item $(while)$ We suppose that \smallskip
	\begin{enumerate}
	
	\item $\sr{[r](x,y)}{\hr{\gamma(x+1)\wedge\alpha(x+1)}{A(x)}{\alpha(x)}}$ assuming that $\sr{y}{A(x+1)}$, \smallskip
	
	\item $\sr{[s](x,y)}{\hr{\neg\gamma(x+1)\wedge\alpha(x+1)}{B}{\beta}}$ assuming that $\sr{y}{A(x+1)}$, \smallskip
	
	\item $\sr{[t](y)}{\hr{\alpha(0)}{B}{\beta}}$ assuming that $\sr{y}{A(0)}$. \smallskip
	
	\end{enumerate}
	Our aim is to show that
	\begin{equation*}
	\sr{[(\while{\gamma}{(\lambda x',y'.r)}{(\lambda x',y'.s)}{(\lambda y'.t)}{x})y]}{\hr{\alpha(x)}{B}{\beta}}	
	\end{equation*}
	for any $x,y\in\nat$ with $\sr{y}{A(x)}$. We observe, unwinding the definition, that
	\begin{equation*}
	[(\while{\gamma}{(\lambda x',y'.r)}{(\lambda x',y'.s)}{(\lambda y'.t)}{x})y]\pi=L_{f,g,h}xy\pi
	\end{equation*}
	for $f:=\lambda x'.[\lambda y'.r(x',y')]$, $g:=\lambda x'.[\lambda y'.s(x',y')]$, $h:=\lambda y'.[t](y')$ and $L_{f,g,h}$ as defined in (\ref{def:L}). We now show by induction on $n$ that if $\sr{y}{A(n)}$ then
	\begin{equation*}
	\sr{L_{f,g,h}ny}{\hr{\alpha(n)}{B}{\beta}}
	\end{equation*}
	and then the result follows by setting $n:=x$. The base case is straightforward since 
	\begin{equation*}
	L_{f,g,y}0y=[t](y)
	\end{equation*}
	and the claim follows by definition of $[t]$. For the induction step, suppose that $\sr{y}{A(n+1)}$ and $[\alpha(n+1)](\pi)$. There are two cases. If $\neg[\gamma](n+1)(\pi)$ we have
	\begin{equation*}
	L_{f,g,h}(n+1)y\pi=[s](n,y)\pi
	\end{equation*}
	and the result holds by the property of $[s]$. On the other hand, if $[\gamma](n+1)(\pi)$ then
	\begin{equation*}
	L_{f,g,h}(n+1)y\pi=L_{f,g,h}ny'\pi'
	\end{equation*}
	for $\pair{y',\pi'}:=[r](n,y)\pi$. But by the property of $[r]$ we have $\sr{y'}{A(n)}$ and $[\alpha(n)](\pi')$, and therefore by the induction hypothesis we have $\sr{b}{B}$ and $[\beta](\pi'')$ for $\pair{b,\pi''}:=L_{f,g,h}ny'\pi'=L_{f,g,h}(n+1)y\pi$, and so the result is proven for $n+1$.

\end{itemize}\medskip
This covers all the additional axioms and rules of $\ma$.
\end{proof}

\subsection{Worked example: Insertion sort}
\label{sec:arithmetic:example}

We now illustrate our extended system by synthesising a list sorting program that, intuitively, forms an implementation of the insertion sort algorithm. Here our state will represent the structure that is to be sorted, and continuing the spirit of generality that we have adhered to throughout, we characterise this structure through a number of abstract axioms. Instantiating the state as, say, an array of natural numbers, would provide a model for our theory, but our sorting algorithm can be extracted on the more abstract level. Crucially, the proof involves both loop iteration and induction, and the corresponding program combines an imperative while loop with a functional recursor.

We begin by axiomatising our state, just as in previous examples. An intuition here is that states represent an infinite array of elements $a_0,a_1,\ldots$ possessing some total order structure $\leq$, and we seek to extract a program that, for any input $n$, sorts the first $n$ elements. We use this informal semantics throughout to indicate the intended meaning of our axioms, but stress that none of this plays a formal role in the proof or resulting computational interpretation.

We introduce three state predicates to $\ma$, with the intuition indicated in each case:\medskip
\begin{itemize}

	\item $\sort(N)$ -- \emph{Sorted: The first $N+1$ elements of the array i.e. $[a_0,\ldots,a_N]$ are sorted}\smallskip
	
	\item $\psort(n,N)$ -- \emph{Partially sorted with respect to $a_n$: if $n<N$ then the list $$[a_0,\ldots,a_{n-1},a_{n+1},\ldots,a_N]$$ is sorted and $a_n\leq a_{n+1}$. For the base cases, if $n=N$ then the list $[a_0,\ldots,a_{N-1}]$ is sorted, and if $n>N$ then the list $[a_0,\ldots,a_N]$ is sorted.}\smallskip
	
	\item $\compare(n)$ -- \emph{Comparison: true if $a_n\leq a_{n-1}$, and always true if $n=0$}\medskip

\end{itemize}
We formalise this intuition by adding the following state independent axioms to $\axh$:\medskip
\begin{enumerate}

	\item $\Gamma,\sort(N)\hlpr\psort(N+1,N+1)$ -- \emph{If the first $N+1$ elements are sorted, then they are also partially sorted with respect to the next element $a_{N+1}$.}\smallskip
	
	\item $\Gamma,\neg\compare(n),\psort(n,N)\hlpr \sort(N)$ -- \emph{If $[a_0,\ldots,a_{n-1},a_{n+1},\ldots,a_N]$ is sorted, $a_n\leq a_{n+1}$, but also $a_{n-1}\leq a_n$, then the entire segment $[a_0,\ldots,a_N]$ must be sorted.}\smallskip
	
	\item $\Gamma,\psort(0,N)\hlpr \sort(N)$ -- \emph{If $[a_1,\ldots,a_N]$ is sorted and $a_0\leq a_1$, then $[a_0,\ldots,a_N]$ is sorted.}\smallskip
	
	\item $\Gamma\hlpr\sort(0)$ -- \emph{The singleton array $[a_0]$ is defined to be sorted.}\medskip

\end{enumerate}
We complete the axiomatisation by adding a single state-sensitive axiom to $\axm$:\medskip
\begin{enumerate}

	\item[(5)] $\Gamma\mlpr\hr{\compare(n+1)\wedge\psort(n+1,N)}{\top}{\psort(n,N)}$ -- \emph{If $[a_0,\ldots,a_n,a_{n+2},\ldots,a_N]$ is sorted and $a_{n+1}\leq a_{n+2}$, but $a_{n+1}\leq a_{n}$, then we can modify the state (i.e. swapping $a_n$ and $a_{n+1}$ by setting $\tilde{a}_n:=a_{n+1}$ and $\tilde{a}_{n+1}:=a_n$) so that $[a_0,\ldots,a_{n-1},\tilde{a}_{n+1},\ldots,a_N]$ is sorted and $\tilde{a}_n\leq \tilde{a}_{n+1}$. The edge cases for $n\geq N$ are interpreted in a more straightforward way.}	\medskip

\end{enumerate}
In order to give a realizing term to this axiom, we representing element swapping semantically by adding a constant $c:\nat\to S\to S$ to our metatheory $\meta$, which satisfies
\begin{equation*}
\compare(n+1,\pi)\wedge\psort(n+1,N,\pi)\implies \psort(n,N,cn\pi)
\end{equation*}
and a corresponding term $\swap:\nat\to C$ to our term calculus, along with the embedding 
\begin{equation*}
[\swap]\pi:=\pair{\lambda n,\pi.\pair{(),cn\pi},\pi}\simeq\pair{c,\pi}
\end{equation*}
so that we can prove
\begin{equation*}
\sr{[\swap\, n]}{\hr{\compare(n+1)\wedge\psort(n+1,N)}{\top}{\psort(n,N)}}
\end{equation*}
With this in place, we can now prove in $\ma$ that the first $N$ elements of the state can be sorted, and extract a corresponding realizing term in $\mta$.
%
\subsubsection{Proof of $\mlpr\hr{\gamma}{\forall N\, \hr{\alpha}{\top}{\sort(N)}}{\gamma}$ in $\ma$}
\label{sec:arithmetic:example:proof}

The core of our proof begins with an instance of the $while$ rule parametrised by $N$, with $\Gamma:=\emptyset$, $A(n):=\top$, $\alpha(n):=\psort(n,N+1)$, $\beta:=\sort(N+1)$ and $\gamma(n):=\compare(n)$:
\begin{equation*}
\vlderivation{
\vlin{}{cons}
	{\top\mlpr\hr{\sort(N)}{\top}{\sort(N+1)}}
	{
	\vlin{}{\forall_SE}
		{\top\mlpr\hr{\psort(N+1,N+1)}{\top}{\sort(N+1)}}
		{
		\vlin{}{\forall_SI}
			{\top\mlpr\hr{\psort(N+1,N+1)}{\forall n\, \hr{\psort(n,N+1)}{\!\top\!}{\sort(N+1)}}{\psort(N+1,N+1)}}
			{
			\vlin{}{while}
				{\top\mlpr\hr{\psort(n,N+1)}{\top}{\sort(N+1)}}
				{
				\vlhy{\mathcal{D}_1 \ \ \ \mathcal{D}_2 \ \ \ \mathcal{D}_3}
				}
			}
		}
	}
}
\end{equation*}
where the final composition inference makes use of the first state independent axiom. Here $\mathcal{D}_1$ represents an instance of the state sensitive axiom
\begin{equation*}
\top\mlpr\hr{\compare(n+1)\wedge\psort(n+1,N+1)}{\top}{\psort(n,N+1)}
\end{equation*}
and $\mathcal{D}_2$ represents the derivation
\begin{equation*}
\vlderivation{
	\vlin{}{cons}
		{\top\mlpr\hr{\neg\compare(n+1)\wedge\psort(n+1,N+1)}{\top}{\sort(N+1)}}
		{\vlhy{\top\mlpr\hr{\sort(N+1)}{\top}{\sort(N+1)}}}
}
\end{equation*}
where composition makes use of the second state independent axiom. Finally $\mathcal{D}_3$ is
\begin{equation*}
\vlderivation{
	\vlin{}{cons}
		{\top\mlpr\hr{\psort(0,N+1)}{\top}{\sort(N+1)}}
		{\vlhy{\top\mlpr\hr{\sort(N+1)}{\top}{\sort(N+1)}}}
}
\end{equation*}
this time making use of the third state independent axiom. Finally we can prove that all lists can be sorted with an outer induction as follows:
\begin{equation*}
\vlderivation{
	\vliin{}{ind}
		{\mlpr\hr{\gamma}{\forall N\, \hr{\alpha}{\top}{\sort(N)}}{\gamma}}
		{\vlin{}{cons}{\mlpr\hr{\alpha}{\top}{\sort(0)}}{\vlhy{\mlpr\hr{\alpha}{\top}{\alpha}}}}
		{\vlin{}{}{\top\mlpr\hr{\sort(N)}{\top}{\sort(N+1)}}{\vlhy{\mathcal{D}}}}
}
\end{equation*}
where $\alpha$ is an arbitrary state predicate, the instance of $cons$ uses the fourth state independent axiom, and $\mathcal{D}$ represents the derivation above.

\subsubsection{Program extraction}
\label{sec:arithmetic:example:program}

We now extract a program that corresponds to the above proof. First of all, we note that the three premises of our while rule are realised by $\swap\, n$, $\unit$ and $\unit$ respectively, and so our derivation $\mathcal{D}$ corresponds to the following program:
\begin{equation*}
\vlderivation{
\vlin{}{cons}
	{y:C\vdash(\lambda n.t(n)y)(N+1):C}
	{
	\vlin{}{\forall_SE}
		{y:C\vdash(\lambda n.t(n)y)(N+1):C}
		{
		\vlin{}{\forall_SI}
			{y:C\vdash\lambda n.t(n)y:\nat\to C}
			{
			\vlin{}{while}
				{y:C\vdash t(n)y:C}
				{
				\vlhy{y:C\vdash \swap\, n : C \ \ \ y:C\vdash \unit:C \ \ \ y:C\vdash \unit:C}
				}
			}
		}
	}
}
\end{equation*}
where
\begin{equation*}
\begin{aligned}
t(n)&:=\while{\compare[z]}{(\lambda x,y.(\swap\, x))}{(\lambda x,y. \unit)}{(\lambda y.\unit)}{n}\\
&\simeq \while{\compare[z]}{(\lambda x.(\swap\, x))}{(\unit)}{(\unit)}{n}
\end{aligned}
\end{equation*}
Then our final induction generates the following program:
\begin{equation*}
\vlderivation{
	\vliin{}{ind}
		{\vdash\rec{(\unit)}{(\lambda x,y.((\lambda n.t(n)y)(x+1)))}:\nat\to C}
		{\vlhy{\vdash\unit:C}}
		{\vlhy{y:C\vdash(\lambda n.t(n)y)(N+1):C}}
}
\end{equation*}
Thus our list sorting program is
\begin{equation*}
\begin{aligned}
&\rec{(\unit)}{(\lambda x,y.((\lambda n.t(n)y)(x+1)))}\\
&\simeq \rec{(\unit)}{(\lambda x.((\lambda n.(\while{\compare[z]}{(\lambda x.(\swap\, x))}{(\unit)}{(\unit)}{n}()))(x+1)))}
\end{aligned}
\end{equation*}
which is essentially an implementation of the insertion sort algorithm, with an outer recursion that sorts initial segments of the list in turn, and an inner loop that inserts new elements into the appropriate place in the current sorted list.

\section{Directions for future work}
\label{sec:future}

In this paper we have presented the central ideas behind a new method for extracting stateful programs from proofs, which include an extension of ordinary first-order logic with Hoare triples, a corresponding realizability interpretation, and a soundness theorem. We emphasise once again that our intention has been to offer an alternative approach to connecting proofs with stateful programs, one that seeks to complement rather than improve existing work by embracing simplicity and abstraction, and which might be well suited to a range of applications in proof theory or computability theory. In this spirit, we conclude with a very informal outline of a series interesting directions in which we anticipate that our framework could be applied.

\subsection{Further extensions and program synthesis}
\label{sec:future:arithmetic}

While our main results have been presented in the neutral setting of first-order predicate logic, it would be straightforward to extend $\ml$ to richer logics with more complex data structures and imperative commands.  Already, the addition of recursion and loops over natural numbers in Section \ref{sec:arithmetic} has allowed us to synthesise a standard in-place sorting algorithm using our abstract axiomatisation of an ordered state, in a similar spirit to \cite{berger-etal:14:imperative}. However, further extensions are naturally possible, including the addition of general fixpoint operators and non-controlled while loops, which would then require a $\meta$ to be replaced by a domain theoretic semantics that allows for partiality. 

Looking a step further ahead, by implementing all of this in a proof assistant, we would have at our disposal a new technique for synthesising correct-by-construction imperative programs. While we do not suggest that this pipeline would directly compete with existing techniques for verifying imperative programs, it could be well suited to synthesising and reasoning about programs in very specific domains, where we are interested in algorithms for which interactions with the state have a restricted form that could be suitably axiomatised within our logic. For example, a more detailed axiomatisation our state as an ordered array along the lines of Section \ref{sec:arithmetic:example}, with a ``swap'' operation and a few other ways of interacting with the state, might give rise to an interesting theory of in-place sort algorithms. Stateful algorithms on other data structures, such as graphs, could presumably also be formalised within our framework.

\subsection{Bar recursion and the semantics of extracted programs}
\label{sec:future:semantics}

Two of the main starting points for this paper, the monadic realizability of Birolo \cite{birolo:12:phd} and the author's own Dialectica interpretation with state \cite{powell:18:state}, address the broader problem of trying to understand the operational semantics of programs extracted from proofs as stateful procedures (the origins and development of this general idea, from Hilbert's epsilon calculus onwards, is brilliantly elucidated in Chapter 1 of Aschieri's thesis \cite{aschieri:11:phd}, who then sets out his own realizability interpretation based on learning). A number of case studies by the author and others \cite{oliva-powell:15:ramsey,oliva-powell:15:bolzano,powell:20:quasiorder,powell-etal:22:krull} have demonstrated that while terms extracted from nontrivial proofs can be extremely complex, they are often much easier to understand if one focuses on the way they interact with the mathematical environment. For example, in understanding a program extracted from a proof using Ramsey's theorem for pairs \cite{oliva-powell:15:ramsey}, it could be illuminating to study the \emph{trace} of the program as it queries a colouring at particular pairs, as this can lead to a simpler characterisation of the \emph{algorithm} ultimately being implemented by the term. 

While the aforementioned analysis of programs has always been done in an informal way, our stateful realizability interpretation would in theory allow us to extract programs which store this trace formally in the state, where our abstract characterisation of state would allow us to implement it in whichever way is helpful in a given setting. For example, in the case of the Bolzano-Weierstrass theorem \cite{oliva-powell:15:bolzano}, our state might record information of the form $x_n\in I$, collecting information about the location of sequence elements. For applications in algebra \cite{powell-etal:22:krull}, one might instead store information about a particular maximal ideal.  
The aforementioned theorems are typically proven using some form of choice or comprehension, and that in itself leads to the interesting prospect of introducing both stateful recursors and while-loops that are computationally equivalent to variants of \emph{bar recursion} \cite{spector:62:barrecursion}. In \cite{powell:16:learning}, several bar recursive programs that arise from giving a computational interpretation to arithmetical comprehension principles are formulated as simple while loops, and these could in principle be incorporated into our system with new controlled Hoare rules in the style of update recursion \cite{berger:04:open}, that replace the conditions $n<N$ and $n\geq N$ in the $\mathcal{A}_i$ above with e.g. $n\in\mathrm{dom}(f)$ and $n\notin  \mathrm{dom}(f)$, where $f$ is some partial approximation to a comprehension function. An exploration of such while-loops from the perspective of higher-order computability theory might well be of interest in its own right.

\subsection{A logic for probabilistic lambda calculi}
\label{sec:future:probabilistic}

Probabilistic functional languages are a major topic of research at present. While work in this direction dates back to the late 1970s \cite{jones-plotkin:89:probabilistic,sahebdjaromi:78:probabilistic} where it typically had a semantic flavour, a more recent theme \cite{dallago-zorzi:08:probabilistic,deliguoro-piperno:95:nondeterministic,dipierro-etal:05:probabilistic} has been to study simple extensions of the lambda calculus with nondeterministic choice operators $\oplus$, where $s\oplus t$ evaluates nondeterministically (or probabilistically) to either $s$ or $t$. While such calculi have been extensively studied, corresponding \emph{logics} that map under some proof interpretation to probabilistic programs are far more rare (although there is some recent work in this direction e.g. \cite{antonelli-etal:22:borel}).

We conjecture that our framework offers a bridge between logic and probabilistic computation through incorporating probabilistic disjunctions into our logic $\ml$ and taking states to be streams of outcomes of probabilistic events together with a current `counter' that increases each time an event occurs. In a simple setting where only two outcomes are possible with equal probability, we can axiomatise this within $\ml$ by adding zero and successor functions (allowing us to create numerals $n$), along with a unary state predicate $\scount(n)$. We can then model probabilistic events by adding the appropriate axioms to $\axm$. Suppose, for example, we add two predicate constants $H(x)$ and $T(x)$ (for \emph{heads} and \emph{tails}), along with constants $c_1,c_2,\ldots$ representing coins. Then flipping a coin would be represented by the axiom schema
\begin{equation*}
\Gamma\mlpr\hr{\scount(n)}{H(c_i)\vee T(c_i)}{\scount(n+1)}
\end{equation*}
where $n$ ranges over numerals and $c_i$ over coin constants, the counter indicating that a probabilistic event has occurred. The act of reading a probability from the state could be interpreted semantically by introducing a constant $\omega:S\to \bool\times S$ to $\meta$, with the axiom 
\begin{equation*}
\scount(n,\pi)\implies (e=\false\implies H(c_i))\wedge (e=\true\implies T(c_i))\wedge\scount(n+1,\pi_1) \mbox{ for $\pair{e,\pi_1}:=\omega\pi$}
\end{equation*}
(alternatively, we could simply define $S:=\nat\times(\nat\to\bool)$ for a type of $\nat$ natural numbers, and define $\omega\pair{n,a}:=\pair{a(n),\pair{n+1,a}}$ and $\scount(n,\pair{m,a}):=m=_\nat n$).

A probabilistic choice operator $\oplus$ can then be added to the language of $\mt$, along with the typing rule $\Gamma\vdash s\oplus t:X+Y$ for $\Gamma\vdash s:X$ and $\Gamma\vdash t:Y$, and the interpretation
\begin{equation*}
[s\oplus t]\pi:=\case{e}{([\inl s]\pi_1)}{([\inr t]\pi_1)} \mbox{ where $\pair{e,\pi_1}:=\omega\pi$}
\end{equation*}
In particular, defining $\flip:=\unit\oplus\unit:C+C$ we would have
\begin{equation*}
\sr{[\flip]}{\hr{\scount(n)}{H(c_i)\vee T(c_i)}{\scount(n+1)}}
\end{equation*}
although we stress that the operator $\oplus$ and would allow for much more complex probabilistic disjunctions, potentially involving additional computational content. 

Our soundness theorem, extended to these new probabilistic axioms and terms, would then facilitate the extraction of probabilistic programs from proofs. For instance, including a winner predicate $W(x)$, two player constant symbols $p_1,p_2$, and adding axioms 
\begin{equation*}
\begin{aligned}
H(c_1), H(c_2)&\mlpr\hr{\alpha}{W(p_1)}{\alpha}\\
T(c_1), T(c_2)&\mlpr\hr{\alpha}{W(p_1)}{\alpha}\\
H(c_1), T(c_2)&\mlpr\hr{\alpha}{W(p_2)}{\alpha}\\
T(c_1), H(c_2)&\mlpr\hr{\alpha}{W(p_2)}{\alpha}
\end{aligned}
\end{equation*}
for any $\alpha$, we could prove 
\begin{equation*}
\mlpr\hr{\scount(n)}{\exists x\, W(x)}{\scount(n+2)}
\end{equation*}
expressing the fact that a winner can be determined after two flips. We can then extract a corresponding probabilistic term for realizing this statement, which would be isomorphic to the expected program that queries the state twice in order to determine the outcome of those flips, and returns either $p_1$ or $p_2$ as a realizer for $\exists x\, W(x)$ depending on the content of the state.

Of course, the details here need to be worked through carefully in order to properly substantiate the claim that our framework could be used to extract probabilistic programs in a natural and meaningful way. At the very least, it is likely that further additions to $\ml$ along with a more intricate state would be needed to incorporate more interesting probabilistic events, such as annotated disjunctions along the lines of \cite{vennekens:etal:04:annotated}. We leave such matters to future work. 

\section*{Acknowledgments}
  \noindent The author thanks the anonymous referees who provided comments on earlier versions of this paper, which resulted in a much improved final version. This research was supported by the Engineering and Physical Sciences Research Council grant number EP/W035847/1.


\bibliographystyle{alphaurl}
\bibliography{tpbiblio}

\end{document}